\newcommand{\vect}[1]{\boldsymbol{#1}}
\title{Precomputing strategy for Hamiltonian Monte Carlo Method based on regularity in parameter space}
\author{Cheng Zhang\footnotemark[1]\and Babak Shahbaba\footnotemark[2]\and Hongkai Zhao\footnotemark[1]}
\begin{document}

%\AtEndEnvironment{thebibliography}{}

\maketitle
\footnotetext[1]{Department of Mathematics, University of California, Irvine, USA}
\footnotetext[2]{Department of Statistics, University of California, Irvine, USA}

\begin{abstract}
Markov Chain Monte Carlo (MCMC) algorithms play an important role in statistical inference problems dealing with intractable probability distributions. Recently, many MCMC algorithms such as Hamiltonian Monte Carlo (HMC) and Riemannian Manifold HMC have been proposed to provide distant proposals with high acceptance rate. These algorithms, however, tend to be computationally intensive which could limit their usefulness, especially for big data problems due to repetitive evaluations of functions and statistical quantities that depend on the data. This issue occurs in many statistic computing problems. In this paper, we propose a novel strategy that exploits smoothness (regularity) in parameter space to improve computational efficiency of MCMC algorithms. When evaluation of functions or statistical quantities are needed at a point in parameter space, interpolation from precomputed values or previous computed values is used. More specifically, we focus on Hamiltonian Monte Carlo (HMC) algorithms that use geometric information for faster exploration of probability distributions. Our proposed method is based on precomputing the required geometric information on a set of grids before running sampling algorithm and approximating the geometric information for the current location of the sampler using the precomputed information at nearby grids at each iteration of HMC. Sparse grid interpolation method is used for high dimensional problems. Tests on computational examples are shown to illustrate the advantages of our method. 
\end{abstract}

\begin{keywords} 
Hamiltonian Monte Carlo; Force map; Sparse grid interpolation; Precomputing
\end{keywords}

%\begin{AMS}
%15A15, 15A09, 15A23
%\end{AMS}

\pagestyle{myheadings}
\thispagestyle{plain}
\markboth{Cheng et al.}{Precomputing strategy for HMC}

\section{Introduction}
Many statistical and machine learning methods rely on costly iterative algorithms for optimization or sampling. One of the most computationally intensive components of these methods is repetitive evaluations of functions, their derivatives, geometric and statistical quantities that depend on the data. This is especially challenging in Big Data problems. To reduce the computation cost, one common approach is subsampling, which restricts the computation to a subset of the data or sample, such as stochastic gradient methods (See, for example, \cite{hoffmann10, wellingTeh11}). Another approach could be to find some computationally cheaper surrogate functions to substitute the expensive objective functions. (See, for example, \cite{rasmussen03, shahbabaSplitHMC}.) In this paper, we propose a different approach that explores smoothness or regularity in parameter space, which is true for most statistical models. When evaluation of functions or statistics quantities are needed at a point in parameter space, interpolation from precomputed values or previous computed values is used. Here, we mainly focus on a state-of-the-art class of Markov Chain and Monte Carlo (MCMC) sampling algorithms called Hamiltonian Monte Carlo (HMC). However, our proposed method could be extended to other computationally intensive, iterative algorithms commonly used in statistics and machine learning. 

MCMC was first introduced by Metropolis \cite{metropolis59} to simulate the distribution of states for a system of idealized molecules. Almost contemporarily, Alder and Wainwright \cite{alder59} proposed a deterministic approach to molecules simulation called molecular dynamics (MD). In the following decades, the MCMC and molecular dynamics approaches have continued to develop in their respective areas. In 1987, Duane, Kennedy, Pendleton and Roweth \cite{duane87} made a remarkable breakthrough by developing a Hybrid Monte Carlo (HMC) algorithm based on combining MCMC and molecular dynamics approaches. This is also known as Hamiltonian Monte Carlo (HMC) in the literature. Neal \cite{neal11} provided an extensive review of this method and presented several extensions. The basic idea is that starting from the current state of MCMC, one can use MD to generate trial moves (i.e., proposals within the Metropolis algorithm) that can move far from the current state (resulting in low autocorrelations) while keeping the acceptance probability high (by moving towards high probability regions). Therefore, the HMC sampling method can provide more rapid and efficient exploration of the parameter space than standard random walk proposals. However, HMC requires expensive gradient computations in order to simulate the Hamiltonian dynamics system. This could be infeasible in Big Data problems. Therefore, in recent years, there have been many attempts to improve computational efficiency of HMC and its variants. (See for example, \cite{wellingTeh11, ahmadian11, girolami11, hoffman11, shahbabaSplitHMC, beskos11, calderhead12, lan14, ahn13distributed, ahnShahbabaWelling14}.) One possible strategy is to use small subsets of data at each iteration \cite{wellingTeh11, shahbabaSplitHMC}. As an alternative approach, the precomputing strategies we propose here can reduce the computation cost of HMC while maintaining the overall efficiency of the method by exploiting smooth dependence of parameters that exists in typical statistical models. 

Before we present our method, we first briefly review HMC in the following section. We then present our proposed method in Section \ref{precompute} and evaluate its performance in Section \ref{experiments} using several examples. In Section \ref{approx}, we discuss an extension of our method that is faster computationally, but converges to an approximation of the target distribution. Finally, Section \ref{discussion} is devoted to discussion of future research directions and applications of our method in other algorithms. 

\section{Hamiltonian Monte Carlo}
In Bayesian Statistics, we are interested in sampling from the posterior distribution of the model parameters $q$ given the observed data, $Y=(y_1,y_2,\ldots,y_N)^T$,
\begin{equation}
P(q|Y) \propto \exp(-U(q)),
\end{equation}
where the potential energy function $U$ is defined as
\begin{equation}
U(q) = -\sum_{i=1}^N\log P(y_i|q) -\log P(q).
\end{equation}
Here, the first term is the negative log-likelihood, and $P(q)$ is the assumed prior on model parameters. The posterior distribution is almost always intractable. Therefore, Markov Chain Monte Carlo (MCMC) algorithms are typically used for sampling from the posterior distribution to perform statistical inference. We could for example use the Metropolis algorithm as follows. Given the current state, $q$, we propose a new state, $q^{\ast}$ using a symmetric proposal distribution such that $P(q^{\ast} | q) = P(q|q^{\ast})$. We then accept the proposed state as our new state with the following probability:
\begin{eqnarray*}
\min(1, \exp[{ U(q)-U(q^{\ast}) }])
\end{eqnarray*}

The standard random walk Metropolis generates proposals by sampling from a normal distribution with its mean set to the current state, $q$. There are more efficient strategies to generate proposals. Among these, Hamiltonian Monte Carlo (HMC) has become increasingly popular due to its capability of making distant proposals (i.e., low autocorrelation) with high acceptance probability. More specifically, HMC introduces a Hamiltonian dynamics system with auxiliary momentum variables $p$ to propose samples of $q$ in a Metropolis framework that explores the parameter space more efficiently compared to standard random walk proposals. Following the dynamics of the introduced Hamiltonian system, HMC generates proposals jointly for $(q, p)$ using the following system of differential equations:
\begin{align}
\frac{dq_i}{dt} &= \frac{\partial H}{\partial p_i}\label{eq:Hq}\\
\frac{dp_i}{dt} &= -\frac{\partial H}{\partial q_i}\label{eq:Hp}
\end{align}
where the Hamiltonian function is defined as $H(q,p) = U(q) + \frac12p^TM^{-1}p$. The quadratic kinetic energy function $K(p) = \frac12p^TM^{-1}p$ corresponds to the negative log-density of a zero-mean multivariate Gaussian distribution with the covariance $M$. Here, $M$ is known as the mass matrix, which is often set to the identity matrix, $I$, but can be used to precondition the sampler using Fisher information \cite{girolami11}. By simulating the Hamiltonian dynamics system together with the correction (i.e., accept/reject) step, HMC generates samples from a joint distribution of $(q,p)$ defined by 
\[
P(q,p) \propto \exp\left(-U(q)-\frac12p^TM^{-1}p\right)
\]
Notice that $q$ and $p$ are independent in general.

Each sample from the HMC algorithm is generated by two steps: the proposal step and the correction step. In the proposal step, new values for the momentum variable $p$ are drawn from their Gaussian distribution. Starting from the current state $(q,p)$, the Hamiltonian dynamics system \eqref{eq:Hq},\eqref{eq:Hp} is simulated for $L$ steps using the leapfrog method (Algorithm \ref{alg:HMC}), with a stepsize of $\epsilon$. Here, $L$ and $\epsilon$ are parameters which needs to be tuned to obtain a reasonable acceptance probability. In the correction step, the proposed state $(q^{\ast},p^{\ast})$ at the end of the trajectory is accepted as the next state of the Markov chain with probability $\min(1,\exp[-H(q^{\ast},p^{\ast}) + H(q,p)])$ and the position variable $q$ is updated correspondingly. These steps are presented in Algorithm \ref{alg:HMC}.

\begin{algorithm}[t]
\KwIn{Starting position $q^{(1)}$ and step size $\epsilon$}
 \For{$t =1,2,\cdots$ }{
  \textit{Resample momentum $p$}\\
  $p^{(t)} \sim \mathcal{N}(0,M),\;(q_0,p_0)$ = $(q^{(t)},p^{(t)})$\\
  \textit{Simulate discretization of Hamiltonian dynamics:}\\
  \For{$l = 1$ to $L$} {
  $p_{l-1} \leftarrow p_{l-1} - \frac{\epsilon}{2} \frac{\partial U}{\partial q}(q_{l-1})$\\
  $q_l \leftarrow q_{l-1} + \epsilon M^{-1}p_{l-1}$\\
   $p_l \leftarrow p_l - \frac{\epsilon}{2} \frac{\partial U}{\partial q}(q_{l})$
  }
  $(q^{\ast},p^{\ast}) = (q_L,p_L)$\\
  \textit{Metropolis-Hasting correction:} \\
  $u \sim \text{Uniform}[0,1]$\\
  $\rho = \exp[{ H(q^{(t)},p^{(t)})-H(q^{\ast},p^{\ast}) }]$\\
  \lIf{$u < \min(1,\rho)$,} {$q^{(t+1)} = q^{\ast}$}
 }
\caption{Hamiltonian Monte Carlo}
\label{alg:HMC}
\end{algorithm}
\vspace{5pt}
Note that in Algorithm \ref{alg:HMC}, when simulating the Hamiltonian dynamics system, we need to repeatedly compute the gradient of the potential energy function $U$. This could be extremely time consuming. Many attempts have been made in recent years to reduce this cost in order to improve the overall computational efficiency of HMC. See for example, Shahbaba et~al.~\cite{shahbabaSplitHMC} and Chen \cite{chen14}. In this paper, we claim that the smooth dependence on parameters can be exploited to approximate the gradient effectively using precomputed or previous-computed values on a grid of points. We will discuss our method in details in the following section. 

\section{Precomputing Strategies}\label{precompute}
\subsection{Insights from an Illustrative Example}
We start with a simple example to motivate our approach. Consider a bivariate Gaussian distribution with known covariance matrix and a conjugate prior
\[
Y|\mu \sim \mathcal{N}(\mu,\Sigma),\quad \mu \sim \mathcal{N}(\mu_0,\Sigma_0)
\]
Note that in this case, the posterior distribution has a closed form so MCMC is not required. However, we use this example to motivate our method. For this problem, the potential energy function and its gradient are given by
\begin{align}
U(\mu) &= \frac12\sum_{i=1}^{N}(y_n-\mu)^T\Sigma^{-1}(y_n-\mu) + \frac12(\mu-\mu_0)^T\Sigma_0^{-1}(\mu-\mu_0)\\
\frac{\partial U}{\partial \mu} &= N\Sigma^{-1}(\mu-\bar{Y}) + \Sigma_0^{-1}(\mu-\mu_0),\quad \bar{Y} = \sum_{i=1}^{N}y_i/N\label{eq:grad}
\end{align}
In the gradient function \eqref{eq:grad}, all the information about the parameter is contained in one single value $\bar{Y}$ (i.e., the \emph{sufficient statistic} for $\mu$). Therefore, if we precompute $\bar{Y}$, gradient computation of the potential energy function $U$ could be reduced to a simple matrix vector multiplication.  Moreover, the gradient function itself is a linear function. In this 2D case, the essential information of $\frac{\partial{U}}{\partial \mu}$ can be captured by its function values at three non-collinear points (left panel of Fig.\ref{fig:2dex}). On the other hand, samples from the posterior distribution are concentrated around the high density region where the neighborhood of one sample is frequently visited in the simulations of Hamiltonian dynamics (right panel of Fig.\ref{fig:2dex}). We use these insights to develop a method that can approximate the gradient function using precomputed values in order to accelerate standard HMC.

\begin{figure}
\begin{tabular}{cc}
\includegraphics[width=0.45\textwidth, height=0.3\textwidth]{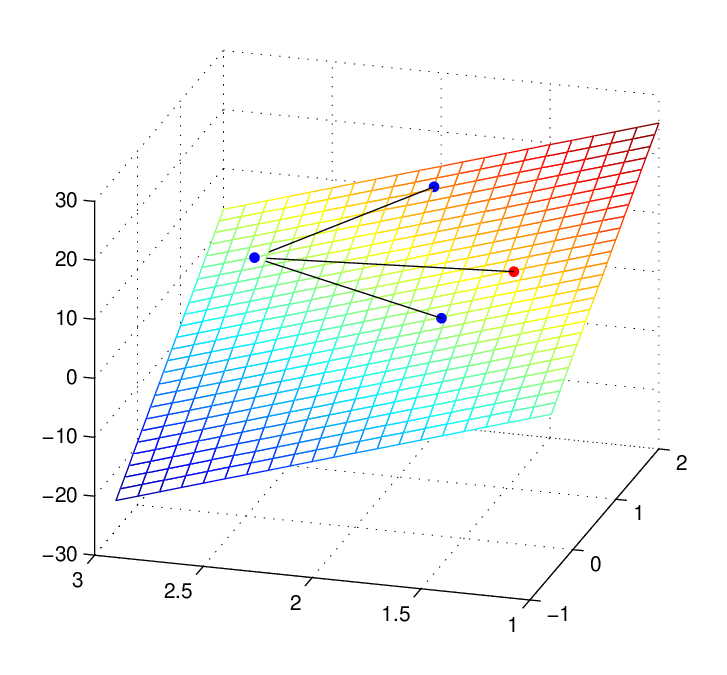}&
\includegraphics[width=0.45\textwidth, height=0.3\textwidth]{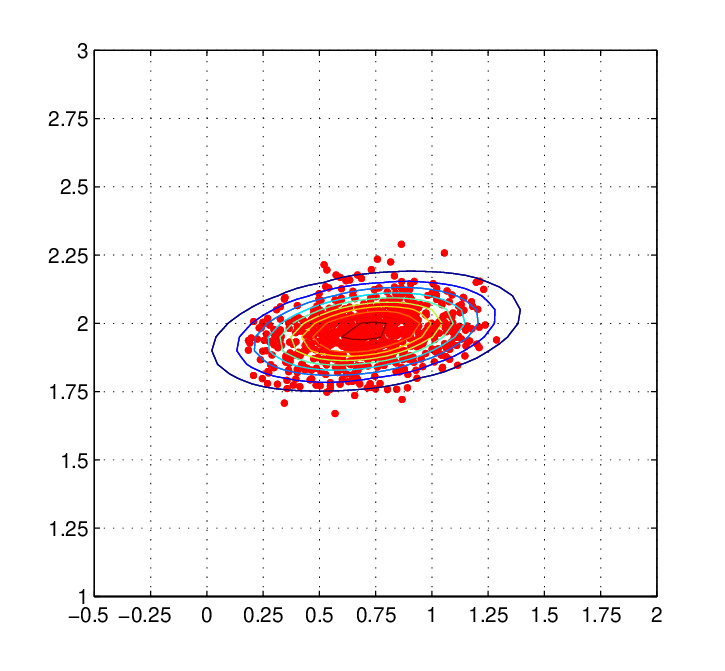} \\
(a) $\nabla_{\mu}U$  & (b)  HMC samples 
\end{tabular}
\caption{2D Gaussian example: (a) the graph of the first component of $\nabla_{\mu}U$. The function value at red point can be obtained by interpolation when the function values at three blue points are known. (b) HMC samples from the posterior distribution.}\label{fig:2dex}
\end{figure}

\subsection{Force Approximation}
If we could solve Hamilton's equations (\ref{eq:Hq} and \ref{eq:Hp}) analytically, the acceptance probability of new proposals in HMC would be exactly one (i.e., each proposal is accepted) because of the conservation of the Hamiltonian \cite{neal11}. However, since solving these equations exactly is too hard in practice, we usually approximate them by discretizing time and using the leapfrog method (Algorithm \ref{alg:HMC}). As a result, the acceptance probability may be less than one. The tradeoff between the accuracy of the proposal-generating mechanisms and Metropolis acceptance probability can go beyond time discretization. Therefore, in this paper we ask the following question: \emph{can we properly approximate the proposal-generating mechanism in order to reduce computational complexity while keeping the acceptance probability at a reasonable level?} We answer this question in the remaining part of this section. 

Note that we can rewrite Hamilton's equations as follows:
\begin{align}
\frac{dq_i}{dt} &= [M^{-1}p]_i\\
\frac{dp_i}{dt} &= -\frac{\partial U}{\partial q_i}\label{eq:force}
\end{align}
The routine of the trajectory for one proposal step will be determined by both the random initialization of the momentum and the negative gradient of the potential energy function, which is called \emph{force} in Physics,
\[F=-\frac{\partial U}{\partial q}\]
The random momentum enables the scheme to explore the target distribution stochastically, and the fictitious force guides the sampler in the right direction so that the entire sampling method would be more efficient than random walk proposals. However, the computation of the true force $F$ is quite expensive. To alleviate this issue, we propose to construct a Hamiltonian dynamics system, at this time for the \emph{proposal step} only, using an alternative Hamiltonian function,
\[
\tilde{H}(q,p) = \tilde{U}(q) + K(p)
\]
where $\tilde{U}$ is an approximation to the potential energy $U$, whose corresponding negative gradient $\tilde{F}$ (which is an approximation to the true force function $F$) can be computed relatively fast. Note that the alternative Hamiltonian $\tilde{H}$ induces a dynamics system which is also reversible and volume preserving, the convergence to the correct target distribution can be guaranteed if we use the original Hamiltonian when calculating the acceptance probability of the proposals (see Appendix \ref{sec:correct} for more details). Since the simulation of a Hamiltonian dynamics system only involves the force function, it suffices to find an approximate force function, $\tilde{F}$, directly.

\subsection{Naive Grid HMC}
To approximate the force function, one could simply use a piecewise constant function, which corresponds to a piecewise linear approximation of the potential energy. In most cases, the high density region of the posterior distribution can be covered by a finite domain $D$, henceforth called ``domain of interest''. If we partition $D$ with a fine grid, justification of an appropriate piecewise constant approximation to the force function $F$ is guaranteed by the smooth dependence of $F$ (or $U$) in parameter space. For a 2-dimensional problem, suppose our domain of interest is $D = [a,b]\times[c,d]$. Given the grid points
\[
x_i = a + i\Delta x,\quad y_j = c + j\Delta y, \quad i,j = 0,1,\ldots,N_p
\]
where $\Delta x,\;\Delta y$ are the corresponding grid sizes, for each cell, $C_{i,j} = [x_{i-1},x_i]\times [y_{j-1},y_j]$, we approximate the force function by its value at the center of the grid, $c_{i,j} = (x_{i-1/2},y_{j-1/2})$: 
\[
\tilde{F}(q) = F_{i,j} \buildrel\triangle\over = F(c_{i,j}), \text{ if }  q \in C_{i,j}
\]  
By the smoothness of $F$, $\|\tilde{F}-F\|_{\infty} \rightarrow 0$ as $\Delta x, \Delta y \rightarrow 0$. Therefore, we can always find some fine grid to achieve the desired approximation accuracy.

\begin{figure}[t]
\begin{center}
\includegraphics[width = .7\textwidth]{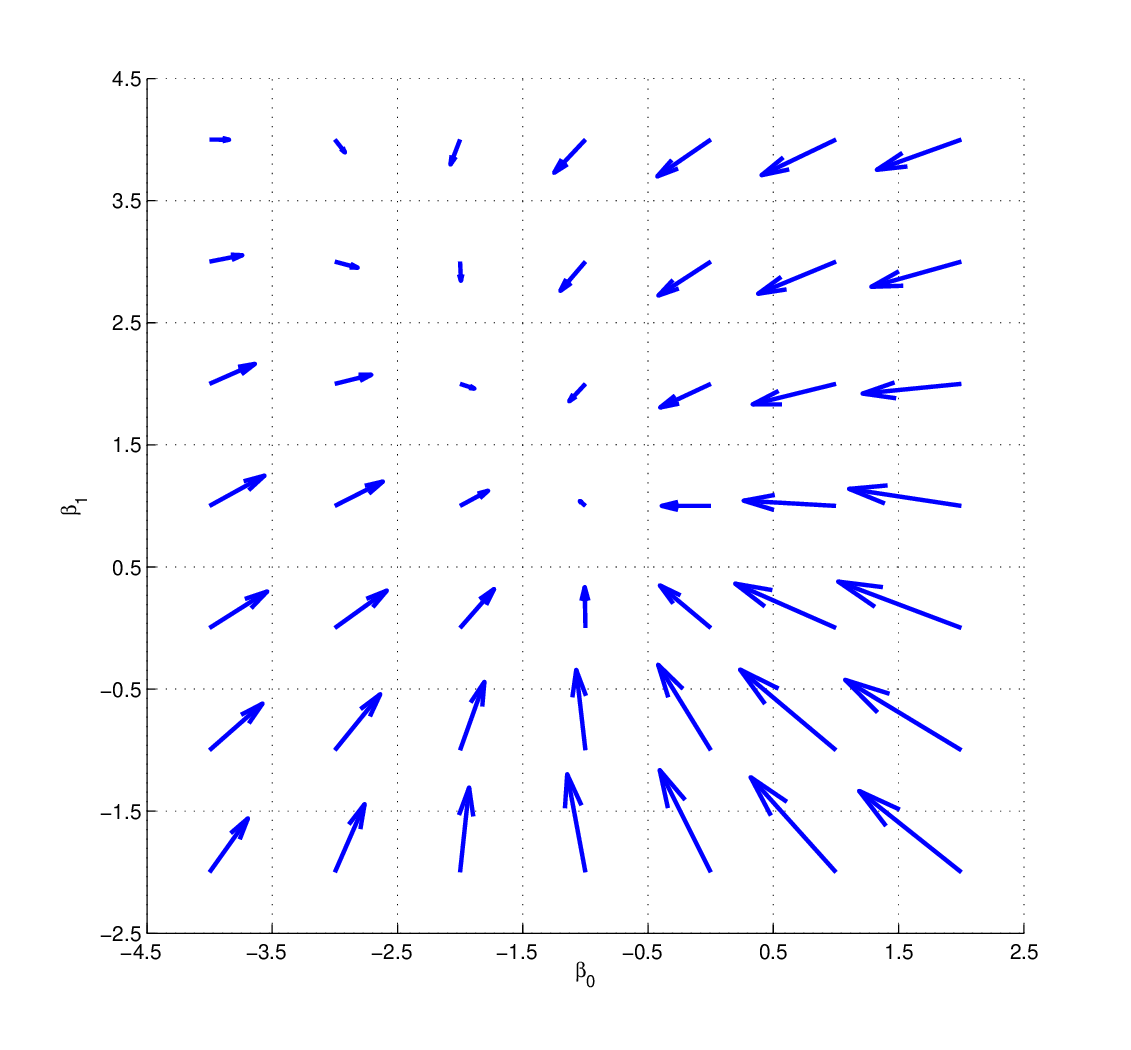}
\caption{Force map of a logistic regression model}\label{fig:forcemap}
\end{center}
\end{figure}

Figure \ref{fig:forcemap} shows a piecewise constant approximation to the force function of a logistic regression model with design matrix $X =(\mathbf{1},X_1)$ and true parameter $\beta = (-1,1)^T$, where $X_1$ follows standard normal distribution. The binary responses $Y=(y_1,y_2,\ldots,y_N)^T$ are sampled independently from Bernoulli distributions
\[
y_i \sim \mathrm{Bernoulli}(p_i),\quad p_i = \frac{\exp(x_i\beta)}{1+\exp(x_i\beta)}
\]
Therefore, the likelihood function is 
\[
L(\beta|X,Y) \propto \prod_{i=1}^{N}p_i^{y_i}(1-p_i)^{1-y_i}
\]
and the potential energy function and the force function are
\begin{align}
U(\beta) &=  -\sum_{i=1}^N\big[y_ix_i\beta -\log(1+\exp(x_i\beta))\big]\\
F(\beta) &= -\frac{\partial U}{\partial\beta} = X^T(Y-P)
\end{align}
where $P = (p_1,p_2,\ldots,p_N)^T$. It can be seen from the graph that: (i) the approximate force ``map'' does point to the right direction so that it provides valid geometric information for HMC; (ii) the approximate force function also changes smoothly, which means that the numerical stability of the leap-frog scheme can be maintained with approximately the same step size as standard HMC. As a result, the proposed scheme with piecewise constant force functions would be consistent and stable. Therefore, we can precompute the piecewise constant function $\tilde{F}$ in advance. When evaluating the force function in the simulation of the Hamiltonian dynamics system, we locate the cell $(i,j)$ for the current parameter $q$ and read the approximate function value $\tilde{F}(q)$ from the precomputed force map. We summarize this approach in Algorithm \ref{HMC:gridmesh} and refer to it as Grid HMC (GHMC).

Our initial results showed that the Naive Grid HMC method would work well for simple problems. However, implementation of this method in general involves two challenges. First, its extension to high dimensional problems could be problematic because as the number of parameters increases, the number of grid nodes at which we need to evaluate the approximate force map grows exponentially. In other words, the method will encounter the curse of dimensionality. The second challenge is related to finding the domain of interest. We will address those two issues in the following two subsections respectively.

\begin{algorithm}[t]
\KwIn{Starting position $q^{(1)}$ and step size $\epsilon$}
Precompute the approximate force map $\tilde{F}: F_{i,j} = -\left(\frac{\partial{U}}{\partial q}\right)_{i,j}$\\
 \For{$t =1,2,\cdots$ }{
  \textit{Resample momentum $p$}\\
  $p^{(t)} \sim \mathcal{N}(0,M),\;(q_0,p_0)$ = $(q^{(t)},p^{(t)})$\\
  \textit{Simulate discretization of Hamiltonian dynamics:}\\
  Find the position of $q_0$ in the force map: $(i_0,j_0)$\\
  \For{$l = 1$ to $L$} {
  $p_{l-1} \leftarrow p_{l-1} + \frac{\epsilon}{2} F_{i_{l-1},j_{l-1}}$\\
  $q_l \leftarrow q_{l-1} + \epsilon M^{-1}p_{l-1}$\\
   Find the position of $q_l$ in the force map: $(i_l,j_l)$\\
    $p_l \leftarrow p_l + \frac{\epsilon}{2} F_{i_l,j_l}$
  }
  $(q^\ast,p^\ast) = (q_L,p_L)$\\
  \textit{Metropolis-Hasting correction:} \\
  $u \sim \text{Uniform}[0,1]$\\
  $\rho = e^{ H(q^{(t)},p^{(t)})-H(q^\ast,p^\ast) }$\\
  \lIf{$u < \min(1,\rho)$,} {$q^{(t+1)} = q^\ast$}
 }
\caption{Naive Grid HMC}
\label{HMC:gridmesh}
\end{algorithm}
\vspace{10pt}

\subsection{Sparse Grid HMC} The sparse grid interpolation method use a special discretization technique to approximate a smooth function over a sparse grid of points (\cite{bung04, klimke05, barth00}). More specifically, it uses a hierarchical basis (a representation of a discrete function space that is equivalent to the conventional nodal basis) and a sparse tensor product construction. Discretization on sparse grids employs $\mathcal{O}(N\cdot \log(N)^{d-1})$ grid points only, where $d$ denotes the dimension and $N$ denotes the number of grid points at the boundary in each coordinate direction (i.e., the mesh size is $h = 1/N$). Using piecewise linear basis functions, the interpolation accuracy could be $\mathcal{O}(N^{-2}\cdot(\log N)^{d-1})$ with respect to the $L_2$ norm and $L_\infty$ norm if the solution has bounded second mixed derivatives. Note that for a full grid, $N^d$ grid points are needed to achieve an approximation accuracy of $\mathcal{O}(N^{-2})$. 

\begin{figure}[t]
\begin{tabular}{cccc}
\includegraphics[width = .23\textwidth]{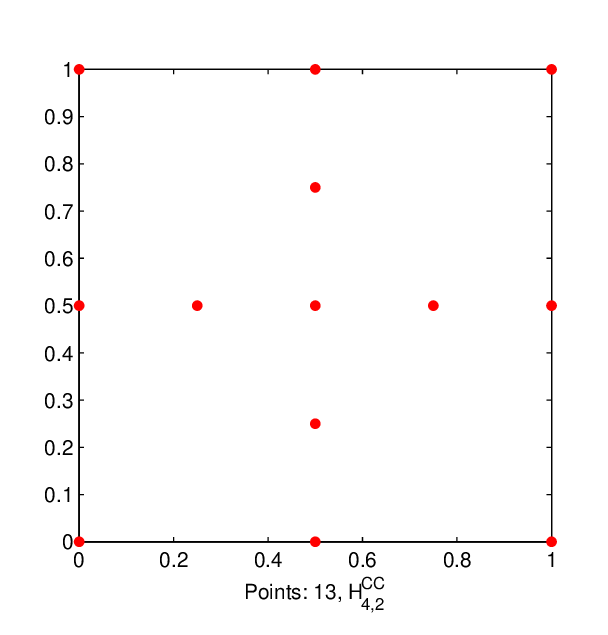} &
\includegraphics[width = .23\textwidth]{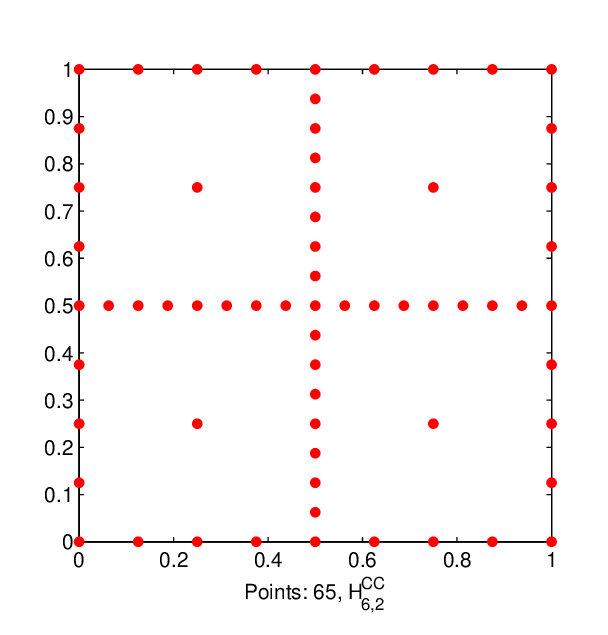} &
\includegraphics[width = .23\textwidth]{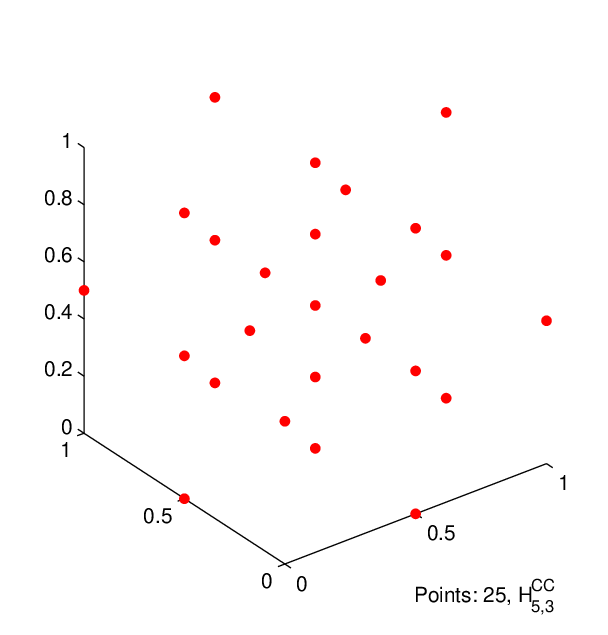} &
\includegraphics[width = .23\textwidth]{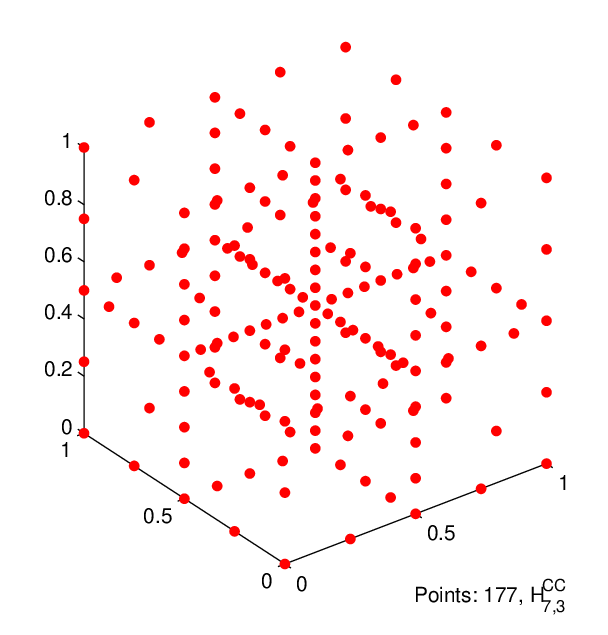} \\
$H^{CC}_{4,2}$  &  $H^{CC}_{6,2}$ & $H^{CC}_{5,3}$ & $H^{CC}_{7,3}$
\end{tabular}
\caption{Clenshaw-Curtis type sparse grids}\label{fig:CC}
\end{figure}

\subsubsection{Smolyak's formula} 
Assume that we want to approximate the smooth functions $f:[0,1]^d\rightarrow \mathbb{R}$ using a finite number of  function values (at support nodes). For the one dimensional case, the interpolation formula is given by 
\[
U^i(f) = \sum_{j=1}^{m_i}f(x^i_j)\cdot a^i_j
\]
where $i\in\mathbb{N},\;X^i=\{x_j^i\in [0,1]|j=1,\ldots,m_i\}$ are the support nodes, and $a_j^i\in C([0,1])$ are the basis functions. We could use the following tensor product for multidimensional cases:
\begin{equation}\label{eq:tensor}
(U^{i_1}\otimes\cdots\otimes U^{i_d}) = \sum_{j_1=1}^{m_{i_1}}\cdots\sum_{j_d=1}^{m_{i_d}}f(x_{j_1}^{i_1},\ldots,x_{j_d}^{i_d})\cdot (a_{j_1}^{i_1}\otimes\cdots\otimes a_{j_d}^{i_d})
\end{equation}
However, the above product formula requires a large number ($m_{i_1}\cdots m_{i_d}$) of support nodes, which are sampled on the full grid. Smolyak's formula then can be applied here to reduce the number of support nodes while maintaining the approximation quality of the interpolation formula up to a logarithmic factor. With $U^0 = 0$, define
\[
\Delta^i = U^i-U^{i-1}, \quad \forall\; i \in \mathbb{N}
\]
Moreover, we put $|\vect{i}| = i_1+\cdots + i_d$ for $\vect{i} \in \mathbb{N}^d$. Then Smolyak's algorithm is given by
\begin{equation}
A_{q,d}(f) = \sum_{|\vect{i}|\leq q} (\Delta^{i_1}\otimes\cdots\otimes\Delta^{i_d})(f) = A_{q-1,d}(f) + \underbrace{\sum_{|\vect{i}|=q}(\Delta^{i_1}\otimes\cdots\otimes\Delta^{i_d})(f)}_{\Delta A_{q,d}(f)}\label{eq:smolyak}
\end{equation}
for integers $q\geq d$, where $A_{d-1,d} = 0$. In fact, \eqref{eq:smolyak} can be presented in terms of the univariate interpolation formulas \cite{wasi95}, 
\[
A_{q,d}(f) = \sum_{q-d+1\leq |\vect{i}| \leq q}\;(-1)^{q-|\vect{i}|}\cdot {d-1\choose q-|\vect{i}|}\cdot (U^{i_1}\otimes\cdots\otimes U^{i_d})(f)
\] 
Therefore, only the function values at the sparse grid 
\begin{equation}\label{eq:sparsegrid}
H_{q,d} = \bigcup_{q-d+1\leq |\vect{i}| \leq q} (X^{i_1}\times\cdots\times X^{i_d})
\end{equation}
are needed to evaluate $A_{q,d}(f)$ . It is better to select the sets $X^i$ in a nested fashion ($X^i \subset X^{i+1}$) to obtain many recurring points with increasing $q$. 

\subsubsection{Sparse grid and Multivariate hierarchical structure}\label{sec:sparsegrid}
There are many possibilities to construct nested sparse grids. As an example, Fig.~\ref{fig:CC} shows the Clenshaw-Curtis type sparse grids $H^{CC}$ in two and three dimensional spaces. With appropriate sparse grid and basis functions $a$, the multivariate interpolation formula~\eqref{eq:smolyak} can be implemented in a hierarchical form where

\begin{equation}\label{eq:multhier}
\Delta A_{q,d}(f) = \sum_{|\vect{i}|=q}\sum_{x_{j_1}^{i_1}\in X_{\Delta}^{i_1}}\cdots\sum_{x_{j_d}^{i_d}\in X_{\Delta}^{i_d}}\big(f(x_{j_1}^{i_1},\ldots,x_{j_d}^{i_d})-A_{q-1,d}(f)(x_{j_1}^{i_1},\ldots,x_{j_d})\big)\cdot(a_{j_1}^{i_1}\otimes\cdots\otimes a_{j_d}^{i_d})
\end{equation}

\begin{algorithm}[t]
\KwIn{Starting position $q^{(1)}$ and step size $\epsilon$}
Precompute the hierarchical surpluses for Smolyak's formula $A_{k+d,d}$ of potential energy $U$\\
 \For{$t =1,2,\cdots$ }{
  \textit{Resample momentum $p$}\\
  $p^{(t)} \sim \mathcal{N}(0,M)$\\
  $(q_0,p_0)$ = $(q^{(t)},p^{(t)})$\\
  \textit{Simulate discretization of Hamiltonian dynamics}\\
  \For{$l = 1$ to $L$} {
  $p_{l-1} \leftarrow p_{l-1} - \frac{\epsilon}{2} \nabla A_{k+d,d}(U)(q_{l-1})$\\
  $q_l \leftarrow q_{l-1} + \epsilon M^{-1}p_{l-1}$\\
  $p_l \leftarrow p_l - \frac{\epsilon}{2} \nabla A_{k+d,d}(U)(q_l)$
  }
  $(q^\ast,p^\ast) = (q_L,p_L)$\\
   \textit{Metropolis-Hasting correction:} \\
  $u \sim \text{Uniform}[0,1]$\\
  $\rho = e^{ H(q^{(t)},p^{(t)})-H(q^\ast,p^\ast) }$\\
  \lIf{$u < \min(1,\rho)$,} {$q^{(t+1)} = q^\ast$} }
\caption{Sparse Grid HMC}
\label{HMC:SG}
\end{algorithm} 
\vspace{5pt}

The hierarchical surpluses 
\[
w_{\vect{j}}^{k,\vect{i}} \buildrel\triangle\over = f(\vect{x}_{\vect{j}}^{\vect{i}}) - A_{k+d-1,d}(\vect{x}_{\vect{j}}^{\vect{i}})
\]
introduced by Bungartz \cite{bung98} can be used to obtain an estimate of the current approximation error and terminate the algorithm automatically when a desired accuracy is reached. More detailed information about the construction of sparse grid and basis functions and the derivation of the hierarchical form is provided in the appendix.

Using sparse grid interpolation \eqref{eq:smolyak} based on the Smolyak algorithm, we can generalize our Naive Grid HMC method to relatively higher dimensional problems. The hierarchical surpluses for the energy function $U$ can be precomputed with certain type of sparse grid and $\nabla A_{q,d}$ can be called to replace the gradient computation (see Algorithm \ref{HMC:SG}).

\subsection{Domain of Interest}\label{subsec:ROI}
The grid needs to be specified over a finite domain of interest such that there is a good balance between the cost and efficiency of the precomputing strategy. That is, we need to find an appropriate bounded domain $D$ that covers most of the high density areas without creating cells that are rarely visited by the sampler. 

Note that for points outside of $D$, one can still use the standard HMC method; that is, there will not be any computational saving for these points. More specifically, the overall potential energy function can be presented as follows: 
\begin{eqnarray} \label{unifiedPotential}
\hat{U}(q) = U(q)(1-\mathbbm{1}_D) + \tilde{U}\mathbbm{1}_D
\end{eqnarray}
where $U$ is the original potential, $\tilde{U}$ is the precomputed approximation over $D$, and $\mathbbm{1}$ is the indicator function. Therefore, within the identified domain the energy function is approximated; whereas, the energy function remains exact outside of the domain. Note that the proposal generating mechanism remains symmetric since after sampling the momentum variable the trajectory is deterministic; therefore, by reversing the time and negating the momentum at the end of the trajectory (i.e., proposal) we always come back to the starting point (i.e, current state of MCMC). The only difference with the standard HMC is that some parts of the trajectory may follow precomputed directions. Our proposed algorithm still uses the original Hamiltonian to compute the acceptance probability, whether the trajectory is all based on approximate gradients (i.e., the sampler remains in the domain), exact gradients (i.e., the sampler remains outside of the domain), or partly approximate, partly exact (i.e., when the sampler moves inside or outside of the domain). Using \ref{unifiedPotential}, the leapfrog scheme used to discretize the corresponding Hamiltonian dynamics system is as follows:
\begin{align*}
p_i(t+\epsilon/2) &= p_i(t) - \epsilon/2\frac{\partial \hat{U}}{\partial q_i}(q(t))\\
q_i(t+\epsilon) &= q_i(t) + \epsilon \frac{p_i(t+\epsilon)}{m_i} \\
p_i(t+\epsilon) &= p_i(t+\epsilon/2) - \epsilon/2\frac{\partial \hat{U}}{\partial q_i}(q(t+\epsilon))
\end{align*}
Note that the scheme is also time reversible and volume preserving (each of the equations are shear transformations). Therefore, simulating the above induced dynamics system and using the original Hamiltonian in the computation of acceptance probability guarantees the convergence to the correct target distribution (see Appendix A). Finally, it is easy to show that the chain remains ergodic and can move between the two domains. To see this, notice that within the first step of the leapfrog, $q$ is updated as follows:
\begin{eqnarray*}
q^{*} &  \sim & N(q - \frac{\epsilon^2}{2}\nabla_{q} \hat{U}(q), \epsilon^2 M)
\end{eqnarray*} 
where the support is $\mathbb{R}^{d}$; therefore, the sampler has non-zero probability to move inside and outside of the domain.

To find the domain of interest, we use Laplace's approximation,
\[
q|Y \buildrel\cdot\over \thicksim
\mathcal{N}(\hat{q},\mathcal{J}^{-1}(\hat{q}))
\]
where $\hat{q}$ is the posterior mode which can be estimated using fast optimization methods, and $\mathcal{J}(\hat{q}) = H_U(\hat{q})$ is the Hessian matrix at the point. Given a pre-specified probability, $p$, we can find a domain with probability $p$ based on the above normal approximation. 

\begin{figure}
\begin{tabular}{cc}
\includegraphics[width=0.45\textwidth, height=0.3\textwidth]{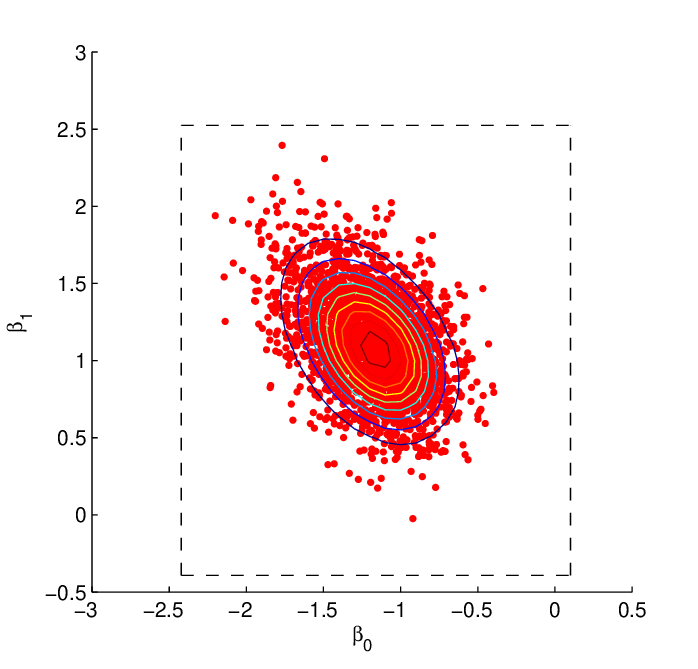}&
\includegraphics[width=0.45\textwidth, height=0.3\textwidth]{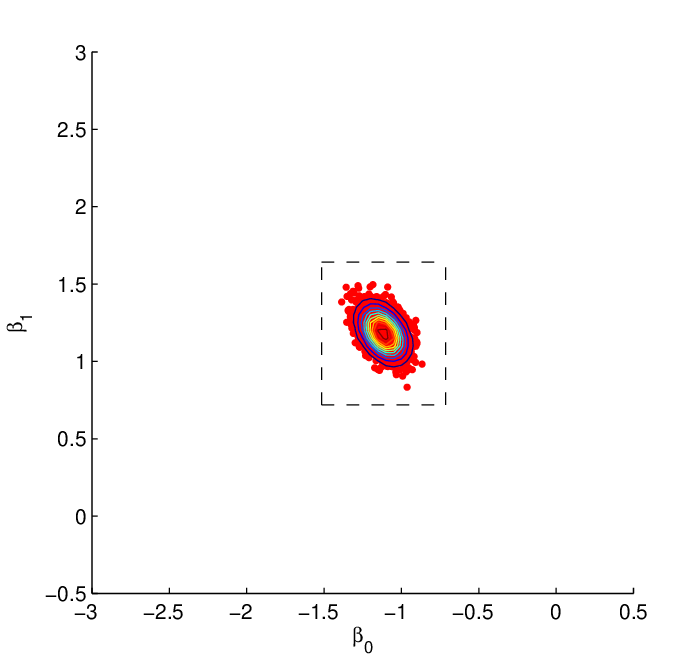} \\
(a) \small{$N=100$} &  (b) \small{$N=1000$} 
\end{tabular}
\caption{Domains of interest using Laplace's approximation for the logistic regression model.}\label{fig:ROI}
\end{figure}
Figure \ref{fig:ROI} shows the domain of interest $R$ for a logistic regression model. It can be seen that for different data sizes ($N=100 \text{ and }1000$), the corresponding domains of interest are adjusted automatically to capture the high density regions of the posterior distribution.

When the high density region is irregular and can not be represented well by a rectangular box, this might not be an efficient approach. Later, we will discuss a more general approach for such cases. 

\section{Experiments}\label{experiments}
In this section, we compare our proposed method to standard HMC using several experiments in terms of sampling efficiency. We define sampling efficiency as time-normalized effective sample size (ESS). Given $B$ MCMC samples for each parameter, we calculate the corresponding ESS =
$B[1 + 2\Sigma_{k=1}^{K}\gamma(k)]^{-1}$, where $\Sigma_{k=1}^{K}\gamma(k)$ is
the sum of $K$ monotone sample autocorrelations \cite{geyer92}. We use the
minimum ESS over all parameters normalized by the CPU time, $s$ (in seconds), as the overall measure
of efficiency: $\min(\textrm{ESS})/\textrm{s}$. The sparse grid interpolation is implemented using Matlab package $\mathtt{spinterp}$ \cite{klimke05}.

Empirical results show that both GHMC and Sparse Grid HMC (sgHMC) provide substantial improvement over standard HMC in terms of efficiency while maintaining relatively high acceptance rates. 

\subsection{Logistic regression} 
\begin{figure}
\begin{tabular}{cc}
\includegraphics[width=0.45\textwidth, height=0.3\textwidth]{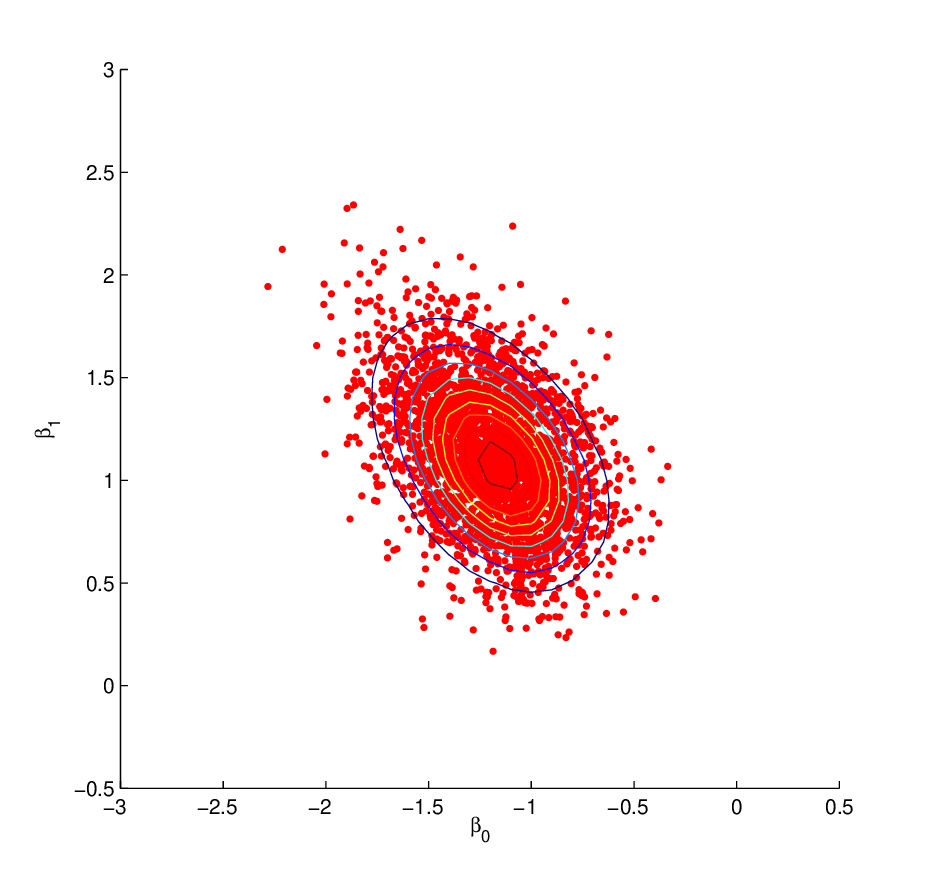}&
\includegraphics[width=0.45\textwidth, height=0.3\textwidth]{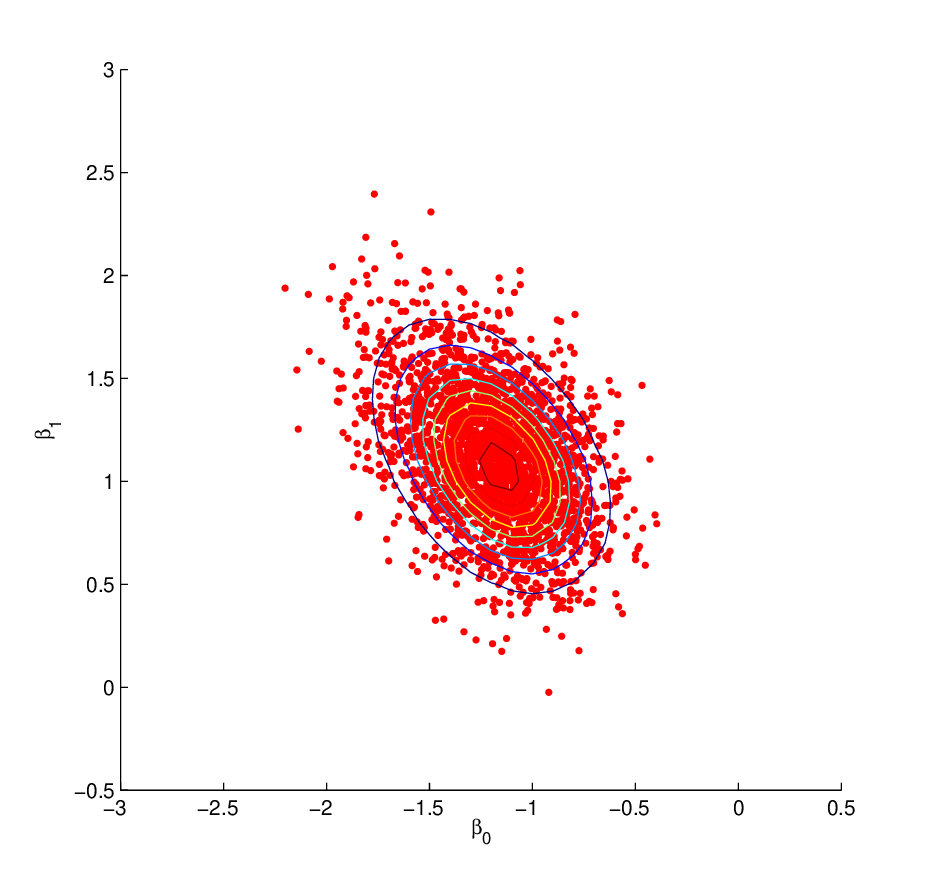}\\
HMC  &  GHMC  
\end{tabular}
\caption{HMC vs GHMC: logistic regression}\label{fig:clog}
\end{figure}

For our first example, we sample $N=100$ data points from a logistic regression model discussed in Section \ref{precompute} and choose the domain of interest to be $[-3,0.5]\times[-0.5,3]$ and set the grid size to $0.1$. Figure \ref{fig:clog} shows posterior samples using standard HMC and GHMC. Note that they both converge to the target distribution and explore the parameter space quite well. Table \ref{tab:clog} compares the performance of these algorithms based on 3200 MCMC iterations after burning the first 800 iterations. As we can see, GHMC outperforms standard HMC in terms of time-normalized ESS.

\begin{table}[!htp]
\begin{center}
\caption{Comparing HMC with GHMC using a logistic regression model. For each method, we provide the acceptance rate (AR), the CPU time (s) for each iteration and the time-normalized ESS}\label{tab:clog}
\begin{tabular}{|c|c|c|c|c|}
\hline
Method &AR &  ESS($\beta_0,\;\beta_1$)  & s/Iteration & min ESS/s   \\\hline
HMC&$0.9225$    &  $(3200,3200)$  &  $7.0157E\text{-}4$   & $1425.3707$ \\
GHMC&$0.7981$     &  $(3200,3200)$  & $3.318E\text{-}4$  & $3013.9031$\\\hline
\end{tabular}
\end{center}
\end{table}

\subsection{Banana-shaped distribution}
The potential energy function for the logistic regression model is quite similar to a Gaussian distribution model, where the resulting force function is relatively smooth. To investigate GHMC's ability to explore the parameter space with a more complicated geometry, we construct a banana-shaped posterior distribution of $\beta =(\beta_1,\beta_2|y)$ based on the following model:
\begin{align*}
y|\beta& \sim \mathcal{N}(\beta_1+\beta_2^2,\sigma_y^2)\\
\beta & \sim \mathcal{N}(0,\sigma_\beta^2)
\end{align*}
The data $\{y_i\}_{i=1}^{100}$ are generated with $\beta_1+\beta_2^2 = 1,\;\sigma_y = 2,\;\sigma_\beta =1.$ The potential energy function is
\begin{equation}
U(\beta) = \sum_{i=1}^{N}\frac{(y_i-\beta_1-\beta_2^2)^2}{2\sigma_y^2} + \frac{\beta_1^2+\beta_2^2}{2\sigma_\beta^2}
\end{equation}
and the force function is
\begin{equation}
F(\beta) = -\frac{\partial U}{\partial \beta} = \frac{\sum_{i=1}^{N}(y_i-\beta_1-\beta_2^2)}{\sigma_y^2} \cdot \begin{pmatrix}1\\2\beta_2\end{pmatrix}-\frac{ \beta}{\sigma_\beta^2}
\end{equation}

\begin{figure}
\begin{tabular}{cc}
\includegraphics[width=0.45\textwidth, height=0.3\textwidth]{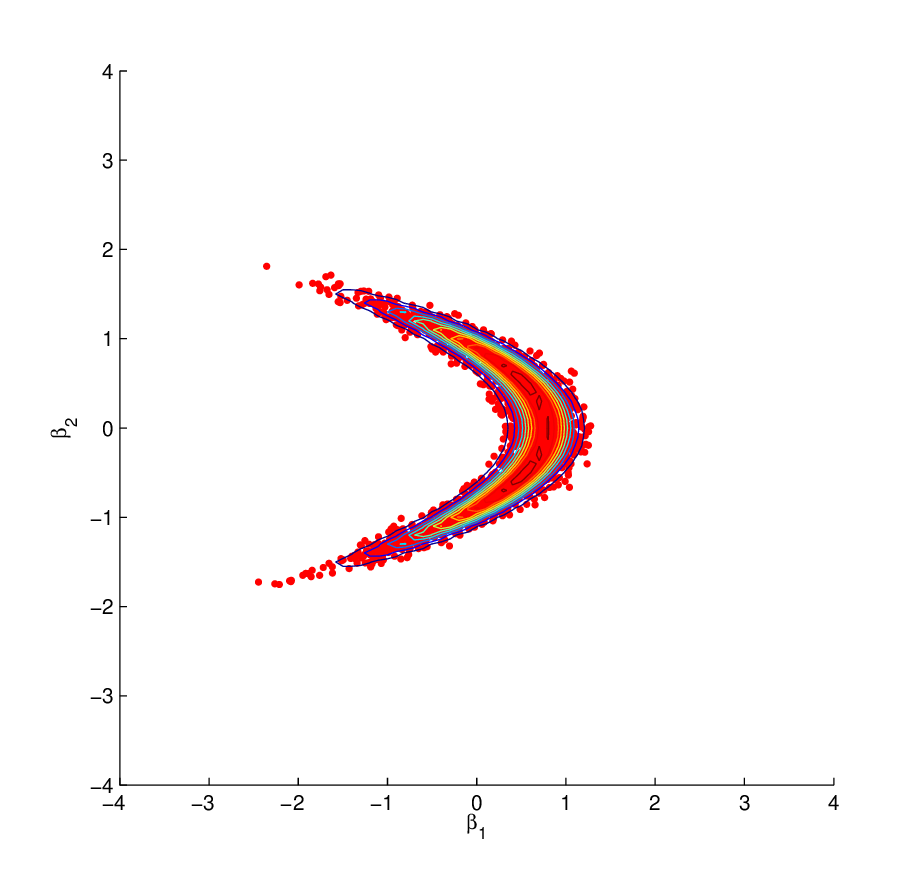}&
\includegraphics[width=0.45\textwidth, height=0.3\textwidth]{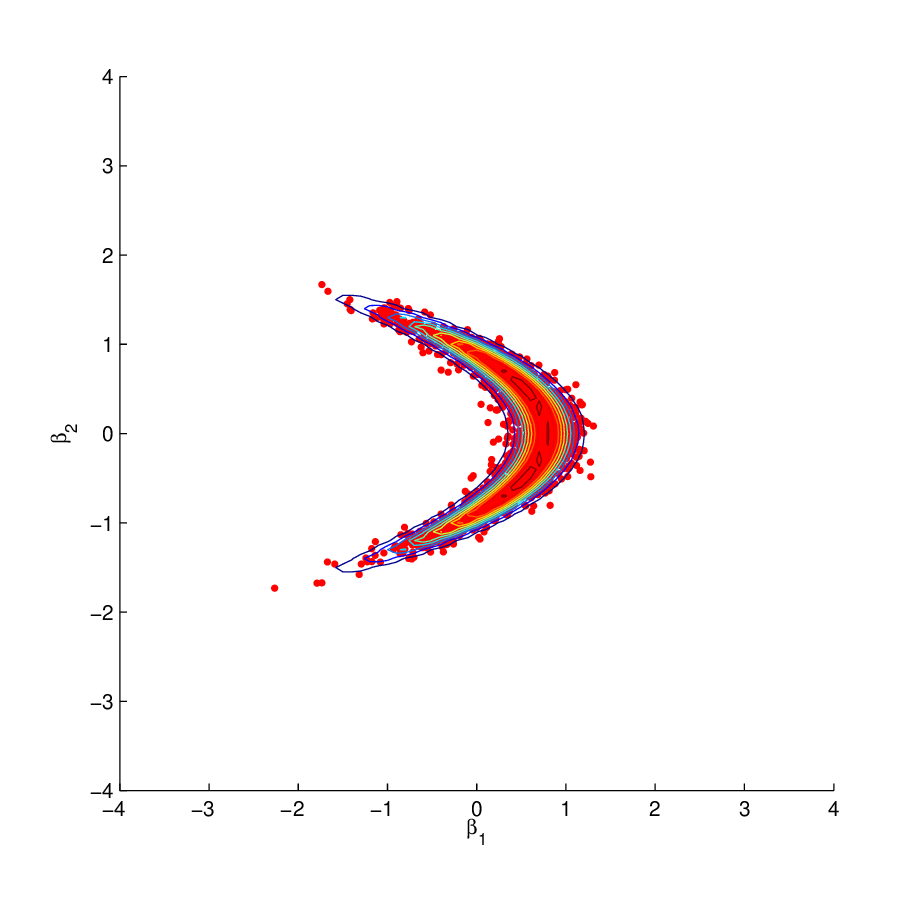} \\
HMC  &  GHMC  
\end{tabular}
\caption{HMC vs GHMC: banana-shaped distribution}\label{fig:cbanana}
\end{figure}

Here, we choose the domain of interest to be $[-4,4]\times[-4,4]$ and set grid size to $0.1$. Figure \ref{fig:cbanana} shows the samples for the posterior distribution using standard HMC and GHMC. As before, both methods converge to the target distribution and explore the parameter space quite well. Even though Banana-shaped distribution is more distorted and the force function is more complex, $\tilde{F}$ (grid size 0.1) still provide a good approximation to the true force function. Table \ref{tab:cbanana} compares the performance of these algorithms based on 3200 MCMC iterations after burning the first 800 iterations. As before, GHMC outperforms standard HMC in terms of time-normalized ESS.

\begin{table}[!htp]
\begin{center}
\caption{Comparing HMC with GHMC using a banana-shaped distribution model. For each method, we provide the acceptance rate (AR), the CPU time (s) for each iteration and the time-normalized ESS}\label{tab:cbanana}
\begin{tabular}{|c|c|c|c|c|}
\hline
Method &AR &  ESS($\beta_1,\;\beta_2$)  & s/Iteration & min ESS/s   \\\hline
HMC&$0.9353$   &  $(2403,1191.6)$  &  $3.8703E\text{-}4$   & $962.1346$ \\
GHMC&$0.6587$   &  $(893.8862,766.2423)$  & $1.4498E\text{-}4$  & $1651.5917$\\\hline
\end{tabular}
\end{center}
\end{table}

\subsection{Gaussian Process model}
For our third example, we use a Gaussian process model. Posterior sampling for these models tends to be quite difficult due to the computation cost associated with inverting the covariance matrix. See Neal \cite{nealGP98} and Rasmussen \cite{rasmussen96} for more details on Gaussian process. Here we construct a 2D Gaussian process with zero mean and the squared exponential covariance function,
\[
Y \sim \mathcal{N}(0,\Sigma),\quad \Sigma_{ij} = \eta\cdot\exp\left(-l\|x_i-x_j\|^2_2\right) + J\cdot\delta_{ij}
\]
where $\eta,l,J$ are positive hyperparameters with log-normal priors. 
\[
\log(\eta) \sim \mathcal{N}(-1,1),\quad \log(l) \sim \mathcal{N}(-1,1),\quad \log(J) \sim \mathcal{N}(-1,1)
\]
Let $\tilde{\eta} = \log(\eta),\;\tilde{l} = \log(l),\;\tilde{J} = \log(J)$, the potential energy function is 
\[
U(\tilde{\eta},\tilde{l},\tilde{J}) =  \frac12\log(|\Sigma|) + \frac12Y^T\Sigma^{-1}Y + \frac12\left[(\tilde{\eta}+1)^2+(\tilde{l}+1)^2+(\tilde{J}+1)^2\right]
\]
and the force function is
\[
F(\beta) = -\frac{\partial U}{\partial \beta} = \frac12\mathrm{tr}\left(\Sigma^{-1}\frac{\partial{\Sigma}}{\partial\beta}\right) -\frac12Y^T\Sigma^{-1}\frac{\partial{\Sigma}}{\partial\beta}\Sigma^{-1}Y + \beta + 1,\quad \beta = (\tilde{\eta},\tilde{l},\tilde{J})^T
\]

\begin{figure}
\begin{tabular}{cc}
\includegraphics[width=0.45\textwidth, height=0.3\textwidth]{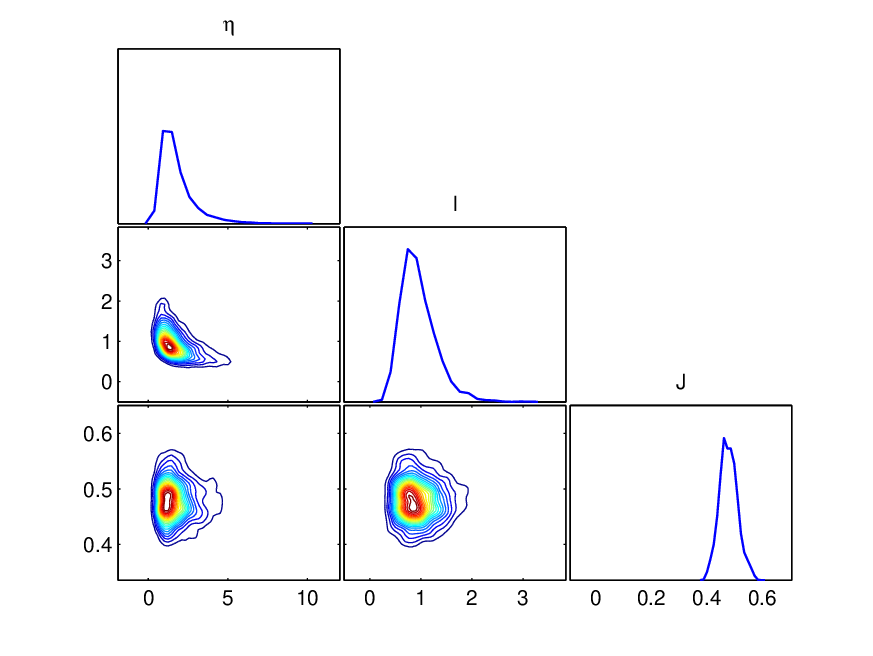}&
\includegraphics[width=0.45\textwidth, height=0.3\textwidth]{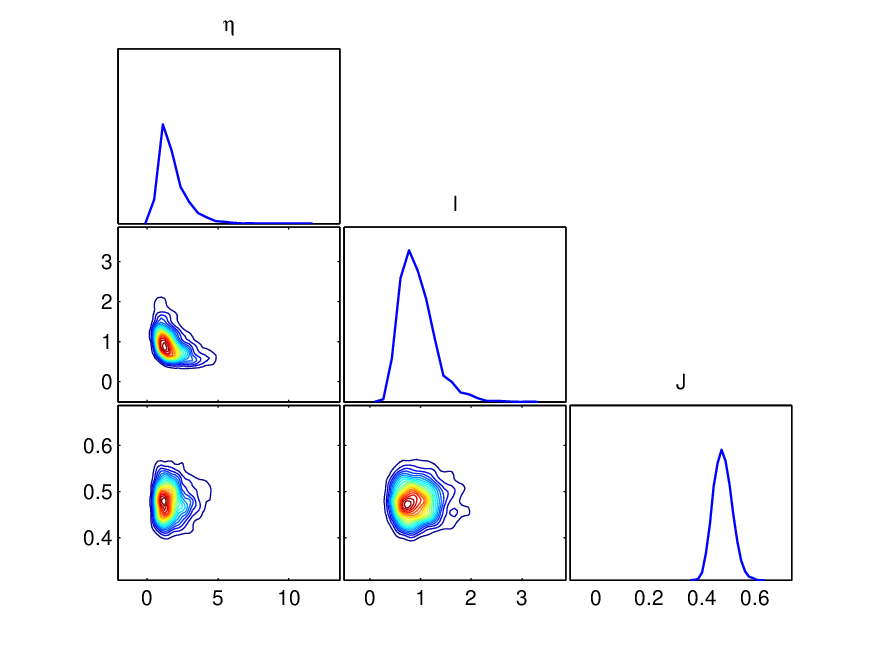} \\
HMC  &  sgHMC  
\end{tabular}
\caption{HMC vs sgHMC: Gaussian Process}\label{fig:gp}
\end{figure}

The domain of interest for $\beta=(\tilde{\eta},\tilde{l},\tilde{J})^T$ is set to be $[-1.6,1.6]\times[-1.6,1.6]\times[-1.2,0.4]$ where we train a sparse gird interpolator to replace the force function. Figure \ref{fig:gp} shows the samples from the posterior distribution given by standard HMC and sgHMC. Table.~\ref{tab:gp} compares the performance of the two algorithms based on 3200 MCMC iterations after 800 burn-in iterations.  As we can see, sgHMC substantially outperform standard HMC. 
\begin{table}[!htp]
\begin{center}
\caption{Comparing HMC with sgHMC using a Gaussian process model. For each method, we provide the acceptance rate (AR), the CPU time (s) for each iteration and the time-normalized ESS}\label{tab:gp}
\begin{tabular}{|c|c|c|c|c|}
\hline
Method &AR &  ESS($\eta,\;l,\;J$)  & s/Iteration & min ESS/s   \\\hline
HMC&$0.9472$   &  $(1021.7,1784.8,3200)$  &  $2.3547E\text{-}1$   & $1.3559$ \\
sgHMC&$0.7066$   &  $(828.7,1380.0,3200)$  & $2.9851E\text{-}2$  & $8.6752$\\\hline
\end{tabular}
\end{center}
\end{table}

\subsection{Elliptic PDE Inverse Problem} 
Our last example is a canonical inverse problem involving inference of the diffusion coefficient in an elliptic PDE (\cite{dashti11, conard14}). The forward model is to solve a two dimensional elliptic PDE 
\begin{equation}\label{eq:ePDE}
\nabla_{\vect{x}}\cdot(c(\vect{x},\theta)\nabla_{\vect{x}}u(\vect{x},\theta)) = 0
\end{equation}
where $\vect{x} =(x_1,x_2)\in[0,1]^2$ is the spatial coordinate. The boundary conditions are
\[
u(\vect{x},\theta)|_{x_2=0} = x_1,\quad u(\vect{x},\theta)|_{x_2=1} = 1-x_1
\]
\[
\frac{\partial u(\vect{x},\theta)}{\partial x_1}\Big|_{x_1=0} = 0,\quad \frac{\partial u(\vect{x},\theta)}{\partial x_1}\Big|_{x_1=1} = 0
\]
This PDE provides a simple model of steady-state flow in porous media. The coefficient $c$ represents the permeability of a porous medium while $u$ represents the pressure head.  In this inverse problem, the objective of interest is to infer the unknown diffusion coefficient  conditioned on observation data where Bayesian approach can be naturally adopted.  A log-Gaussian process prior is given to the diffusivity field $c(\vect{x})$ with mean zero and an isotropic squared-exponential covariance kernel:
\[
C(\vect{x}_1,\vect{x}_2) = \sigma^2\exp\left(-\frac{\|\vect{x}_1-\vect{x}_2\|^2_2}{2l^2}\right)
\]
for which we choose variance $\sigma^2 =1$ and a length scale $l=0.2$. With this prior, the field can be easily parameterized with a Karhunen-Loeve (K-L) expansion:
\[
c(\vect{x},\theta) \approx \exp\left(\sum_{i=1}^d\theta_i\sqrt{\lambda_i}v_i(\vect{x})\right)
\]
where $\lambda_i$ and $v_i(\vect{x})$ are the eigenvalues and eigenfunctions of the integral operator defined by the kernel $C$, and the parameter $\theta_i$ are endowed with independent standard normal priors, $\theta_i\sim\mathcal{N}(0,1)$, which are the targets of inference. To reduce the dimension of this inference problem, the Karhunen-Loeve expansion is truncated at the first five modes ($d=5$) and the corresponding mode weights $(\theta_1,\ldots,\theta_5)$ are conditioned on data. Data are generated by combining observations of the solution field (solve the PDE on a finer 51-by-51 grid) on a uniform $11\times 11$ grid covering the unit square with additive independent Gaussian noise.
\[
y_j = u(\vect{x}_j,\theta) + \epsilon_j,\quad \epsilon_j\sim\mathcal{N}(0,0.1^2)
\]
Consistent with results from previous examples, sgHMC performs substantially better than HMC (Table \ref{tab:ePDE}). 

\begin{figure}
\begin{tabular}{cc}
\includegraphics[width=0.45\textwidth, height=0.3\textwidth]{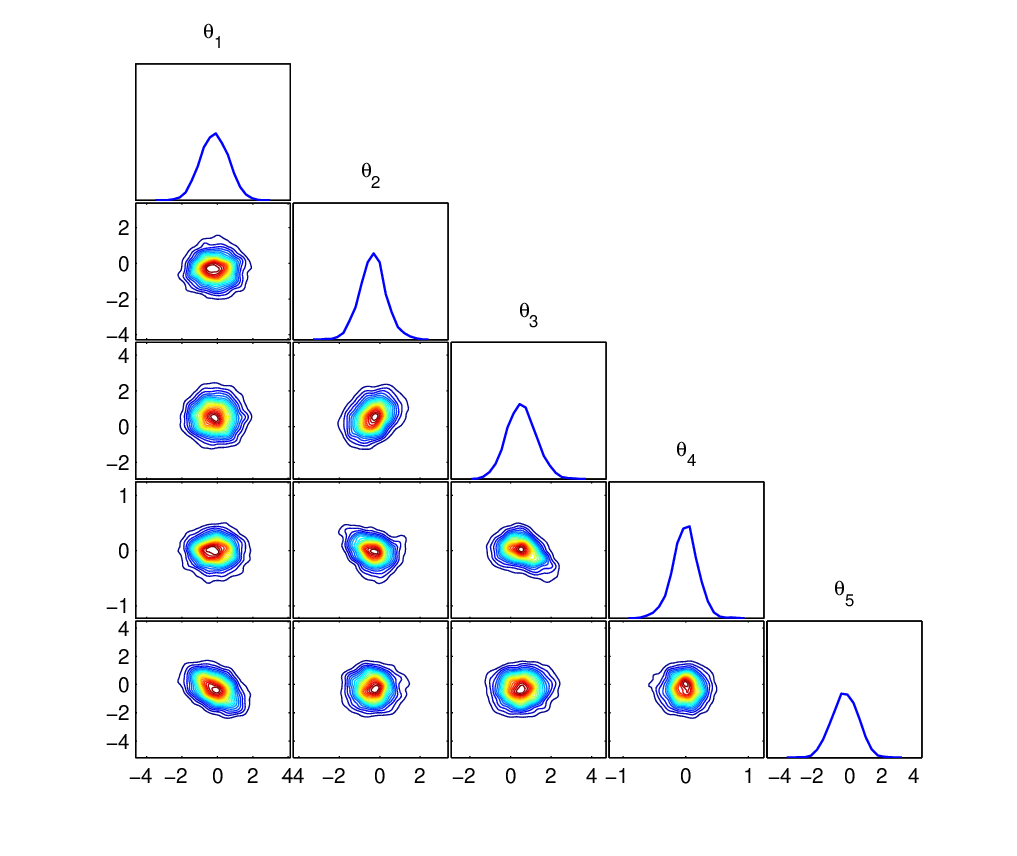}&
\includegraphics[width=0.45\textwidth, height=0.3\textwidth]{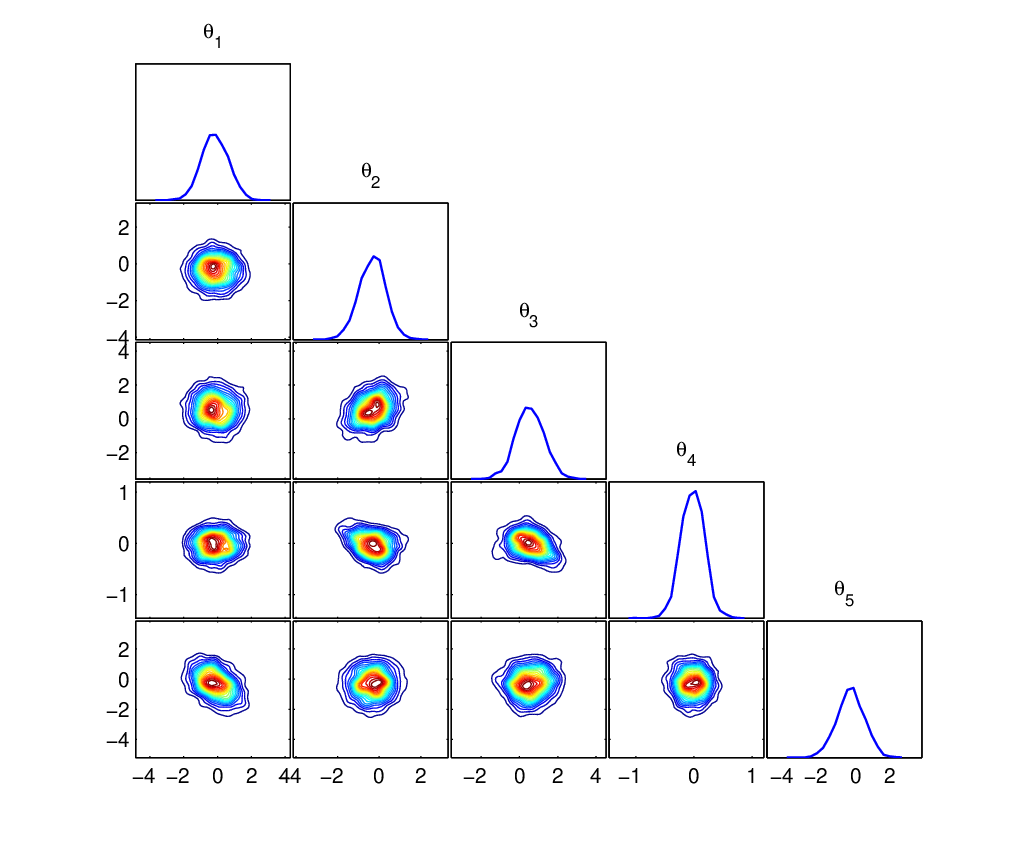} \\
HMC  &  sgHMC  
\end{tabular}
\caption{HMC vs sgHMC: an elliptic PDE inverse problem}\label{fig:ePDE}
\end{figure}

\begin{table}[!tp]
\begin{center}
\caption{Comparing HMC with sgHMC using an elliptic PDE inverse problem. For each method, we provide the acceptance rate (AR), the CPU time (s) for each iteration and the time-normalized ESS}\label{tab:ePDE}
\begin{tabular}{|c|c|c|c|c|}
\hline
Method &AR &  ESS  & s/Iteration & min ESS/s   \\\hline
HMC&$0.7719$   &  $(991.8,2091.2,2831.0)$  &  $2.02E\text{-}1$   & $1.5343$ \\
sgHMC&$0.6141$   &  $(855.7,1325.7,1937.5)$  & $6.1952E\text{-}2$  & $4.3165$\\\hline
\end{tabular}
\end{center}
\end{table}

\subsection{Computational Efficiency as Dimension Increases}
As the dimensionality of parameter space increases, the efficiency of the sparse grid interpolation decreases in general. More specifically, it requires more grid points to maintain the quality of approximation which in turn mitigates the benefit of using sparse grid interpolation. To investigate the performance of sgHMC under different dimensionality, we apply it to large scale ($N=10^5$) logistic regression models in different dimensions. We choose the step size to keep the acceptance rate around $70\%$ for HMC and collect $4000$ samples after $1000$ burn-in iterations. Both algorithms are run $10$ times and averaged to reduce the random effects on the results.

\begin{figure}[!tp]
\begin{tabular}{cc}
\includegraphics[width=0.45\textwidth]{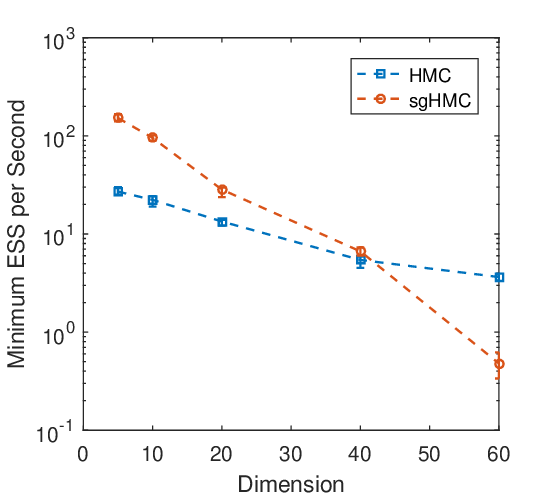}&
\includegraphics[width=0.45\textwidth]{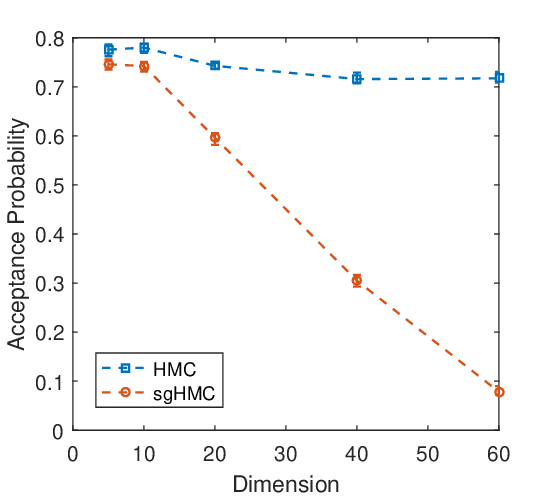} \\
\end{tabular}
\caption{Comparing HMC and sgHMC under different dimensionality on logistic regression models}\label{fig:comp}
\end{figure}

From Figure \ref{fig:comp}, we can see that sgHMC maintains efficient in mediate dimensions. As the dimensionality increases, the efficiency of sparse grid interpolation drops and the computation gain in speed eventually fails to offset the loss in approximation accuracy (see the acceptance probability in the left panel). At a dimensionality around $40$, HMC surpasses sgHMC in our current approach on these examples.

{\bf{Remark:}} Note that the efficiency of sparse grid interpolation only depends on the regularity of the target function and the dimensionality of parameter space, which makes it scalable to large scale learning problems and computationally intensive models. Even though so far our sgHMC algorithm can not generalize to extremely high dimensional problems, it can find applications on many important problems with moderate dimensionality and expensive function evaluations, such as learning hyper-parameters and Bayesian uncertainty quantification for differential equations.

\section{Approximate Target Distribution}\label{approx}
So far, our proposed method has been based on using the exact target distribution and approximating the proposal generating mechanism only. We can improve the computation speed even more by using grid approximation for $U$ in the correction step (accept/reject step) as well. In this case, the resulting sampler actually samples from an approximate distribution \[Q(q)\propto \exp(-\tilde{U}(q))\] instead of the target posterior distribution $P(q)$. The bound of the difference between these two distributions measured by the Kullback-Leibler divergence is shown in the following theorem. 

\begin{theorem}\label{th:KL} If ~$U$ and~ $V$ are energy functions corresponding to probability distributions $P$ and $Q$
\[
P(q)\propto \exp(-U(q)),\quad Q(q)\propto \exp(-V(q))
\]
then the Kullback-Leibler divergence between $P$ and $Q$ is bounded by
\[
D_{KL}(P\|Q) \leq  2\|U-V\|_{\infty}
\]
\end{theorem}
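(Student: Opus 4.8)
The plan is to work directly from the definition of the Kullback--Leibler divergence, making the two normalizing constants explicit. Writing $Z_U = \int \exp(-U(q))\,dq$ and $Z_V = \int \exp(-V(q))\,dq$, so that $P(q) = \exp(-U(q))/Z_U$ and $Q(q) = \exp(-V(q))/Z_V$, the log-ratio inside the divergence splits cleanly as
\[
\log\frac{P(q)}{Q(q)} = \big(V(q)-U(q)\big) + \log\frac{Z_V}{Z_U}.
\]
Integrating against $P$ and using $\int P(q)\,dq = 1$ then gives the identity
\[
D_{KL}(P\|Q) = \E_P[V-U] + \log\frac{Z_V}{Z_U},
\]
which reduces the whole statement to bounding these two terms separately by $\|U-V\|_{\infty}$.

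The first term is immediate: since $\int P(q)\,dq = 1$, we have $\E_P[V-U] \le \|V-U\|_{\infty} = \|U-V\|_{\infty}$. The second term, the ratio of partition functions, is the only step requiring a little care, and is where I would focus the argument. Setting $m = \|U-V\|_{\infty}$, the pointwise two-sided bound $|U(q)-V(q)| \le m$ transfers to the integrands through monotonicity of the exponential: from $V \le U + m$ we get $Z_V \ge e^{-m} Z_U$, and from $V \ge U - m$ we get $Z_V \le e^{m} Z_U$. Taking logarithms yields $\big|\log(Z_V/Z_U)\big| \le m$.

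Combining the two bounds gives
\[
D_{KL}(P\|Q) \le \|U-V\|_{\infty} + \|U-V\|_{\infty} = 2\|U-V\|_{\infty},
\]
as claimed. I do not anticipate a genuine obstacle here; the argument is elementary once the normalizers are isolated. The only slightly non-obvious point is that the unknown constants $Z_U$ and $Z_V$ cancel up to a factor controlled by the very same sup-norm that controls the pointwise difference, so no separate estimate on the partition functions beyond the trivial exponential bound is needed. It is worth noting that the argument never uses positivity of $D_{KL}$ and in fact bounds the absolute value of the (signed) integral $\int P\log(P/Q)$, which is slightly stronger than the stated inequality.
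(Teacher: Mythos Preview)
Your proposal is correct and follows essentially the same approach as the paper: both split $D_{KL}(P\|Q)$ into the expectation $\E_P[V-U]$ and the log-ratio of normalizing constants, then bound each term by $\|U-V\|_\infty$ via the pointwise exponential bound on the integrands. The only cosmetic difference is that you establish a two-sided bound $|\log(Z_V/Z_U)|\le \|U-V\|_\infty$, whereas the paper uses only the one direction actually needed.
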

\begin{proof}
\begin{align*}
D_{KL}(P\|Q) = & \int_{\mathbb{R}^D}P(q)\ln\left(\frac{P(q)}{Q(q)}\right)\;dq\\
= & \int_{\mathbb{R}^D}P(q) (V(q)-U(q))\;dq +  \int_{\mathbb{R}^D}P(q) \ln\left(\frac{I_Q}{I_P}\right)\;dq
\end{align*}
where 
\[
I_P =\int_{\mathbb{R}^D}\exp(-U(q))\;dq,\quad I_Q =\int_{\mathbb{R}^D}\exp(-V(q))\;dq
\]
since
\begin{align*}
I_Q =& \int_{\mathbb{R}^D}\exp(-V(q))\;dq = \int_{\mathbb{R}^D}\exp(-U(q))\cdot\exp(-(V(q)-U(q)))\;dq\\
\leq \;& \exp(\|V-U\|_{\infty})\cdot\int_{\mathbb{R}^D}\exp(-U(q))\;dq = \exp( \|V-U\|_{\infty})\cdot I_P
\end{align*}
we have
\[
D_{KL}(P\|Q) \leq  \|V-U\|_{\infty}\cdot \int_{\mathbb{R}^D}P(q)\;dq +  \|V-U\|_{\infty}\cdot \int_{\mathbb{R}^D}P(q)\;dq
 = 2\|V-U\|_{\infty}
\]
\end{proof}

Note that if the potential energy function $U$ is a smooth function, $\|\tilde{U}-U\|_{\infty}\rightarrow 0$ as the grid size goes to $0$. By Theorem.~\ref{th:KL}, the resulting sampler will eventually converge to the target sampler.

We apply this method, called GHMC-complete, to the logistic regression and banana-shaped distribution examples discussed above. The results are shown in Figures \ref{fig:clog2} and \ref{fig:cbanana2}. As we can see, the posterior samples given by GHMC-complete in both cases match the exact samplers (HMC and GHMC) quite well. With appropriate grid size (around the step size), GHMC-complete can provide a high quality approximation to the standard HMC sampler. At the same time, computational efficiency has been substantially improved due to the fast computation of potential energy function in the correction step (Tables \ref{tab:clog2} and \ref{tab:cbanana2}). 

For the logistic regression example, Figure \ref{fig:AvgLogLike} shows the prediction accuracy vs. the run time for the three algorithms based on a test set. As we can see, the prediction accuracy (measured in terms of the average log-likelihood on the test data) of GHMC-complete increases faster compared to the other two methods. For computationally intensive models, the advantage of GHMC-complete will be more significant since the computation cost of the potential energy function becomes more expensive.

\section{Discussion}\label{discussion}
Due to its ability of producing distant proposals with high acceptance probability, HMC can provide rapid exploration of the parameter space when sampling from the posterior distribution. However, the gradient computation to obtain essential geometric information prevents its application on computationally intensive problems when the data size is large. To address this issue, we have proposed a relaxed framework, where HMC can take advantage of the smoothness of the potential energy function $U$ in parameter space to accelerate computation by using grid-based precomputing strategies. The key idea is to approximate the force field generated by the potential energy function $U$ through interpolation of those precomputed field at grid points in each HMC iteration. Based on these ideas, two simple grid based algorithms, Naive Grid HMC and Sparse Grid HMC, are proposed and evaluated on several problems. Empirical results show that our approach can capture the main information needed for HMC's implementation at a lower computational cost. As a result, our method tends to be more effective than standard HMC.

\begin{figure}[t]
\begin{tabular}{ccc}
\includegraphics[width=0.32\textwidth]{figs/HMC_Original_logit.eps}&
\includegraphics[width=0.32\textwidth]{figs/HMC_gridmesh_logit.eps} &
\includegraphics[width=0.32\textwidth]{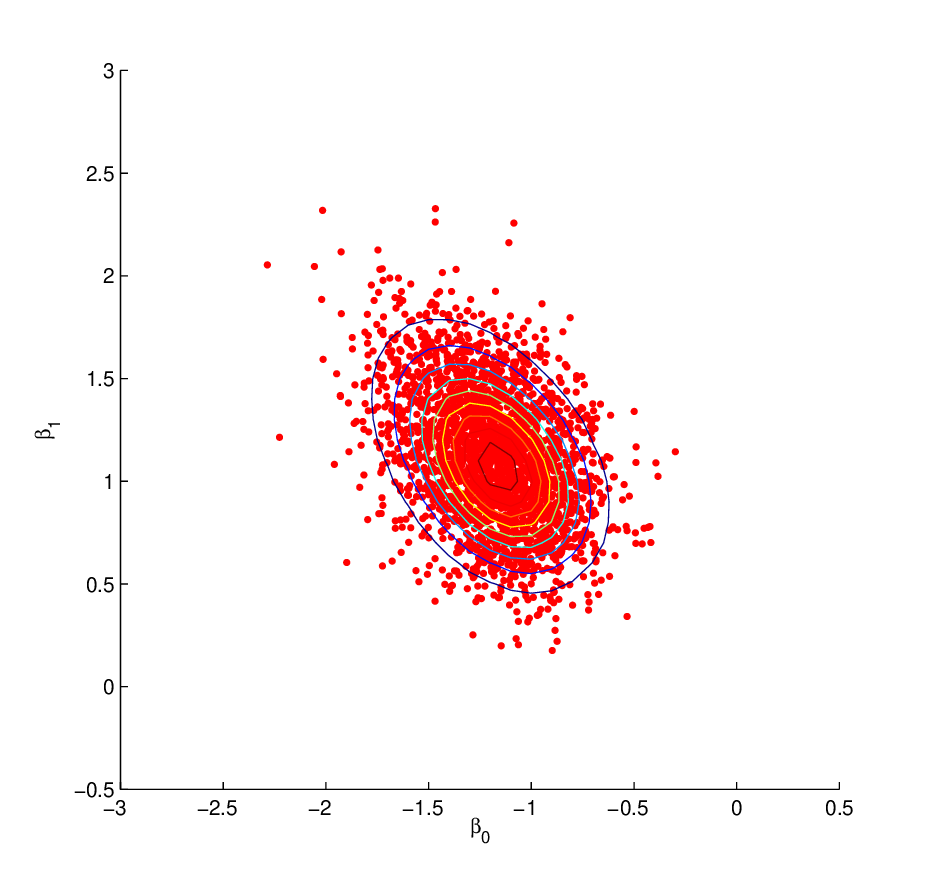} \\
HMC  &  GHMC  & GHMC-complete
\end{tabular}
\caption{HMC vs GHMC: logistic regression}\label{fig:clog2}
\begin{tabular}{ccc}
\includegraphics[width=0.32\textwidth]{figs/HMC_Original_banana.eps}&
\includegraphics[width=0.32\textwidth]{figs/HMC_gridmesh_banana.eps} &
\includegraphics[width=0.32\textwidth]{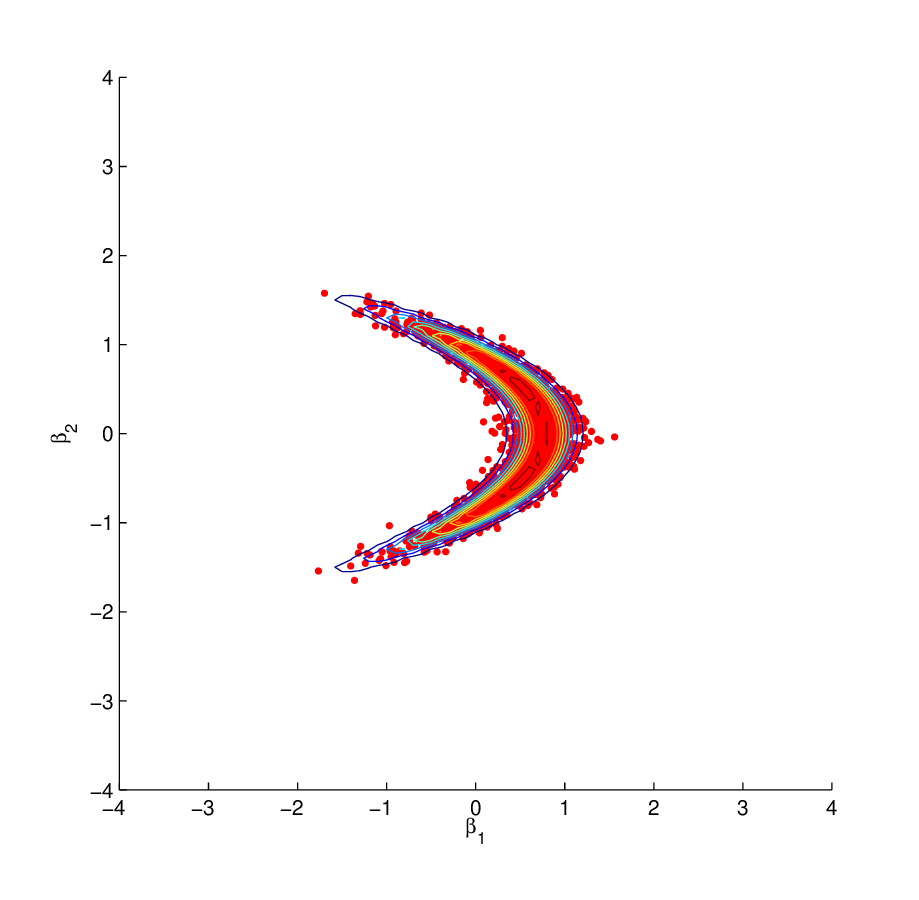} \\
HMC  &  GHMC  & GHMC-complete
\end{tabular}
\caption{HMC vs GHMC: banan-shaped distribution}\label{fig:cbanana2}
\vspace{-10pt}
\end{figure}

\begin{table}[t]
\begin{center}
\caption{Comparing HMC with GHMC using a logistic regression model. For each method, we provide the acceptance rate (AR), the CPU time (s) for each iteration and the time-normalized ESS}\label{tab:clog2}
\begin{tabular}{|c|c|c|c|c|}
\hline
Method &AR &  ESS($\beta_0,\;\beta_1$)  & s/Iteration & min ESS/s   \\\hline
HMC&$0.9225$    &  $(3200,3200)$  &  $7.0157E\text{-}4$   & $1425.3707$ \\
GHMC&$0.7981$     &  $(3200,3200)$  & $3.318E\text{-}4$  & $3013.9031$\\
GHMC-complete & $0.7931$ & $(3191.8,3200)$ & $2.9237E\text{-}4$ & 3411.5275\\\hline
\end{tabular}
\end{center}
\begin{center}
\caption{Comparing HMC with GHMC using a banana-shaped distribution model. For each method, we provide the acceptance rate (AR), the CPU time (s) for each iteration and the time-normalized ESS}\label{tab:cbanana2}
\begin{tabular}{|c|c|c|c|c|}
\hline
Method &AR &  ESS($\beta_1,\;\beta_2$)  & s/Iteration & min ESS/s   \\\hline
HMC&$0.9353$   &  $(2403,1191.6)$  &  $3.8703E\text{-}4$   & $962.1346$ \\
GHMC&$0.6587$   &  $(893.8862,766.2423)$  & $1.4498E\text{-}4$  & $1651.5917$\\
GHMC-complete & $0.6697$ & $(980.1443,796.2977)$ &  $1.2279E\text{-}4$  & $2026.6108$\\\hline
\end{tabular}
\end{center}
\end{table}

\begin{figure}[t]
\begin{center}
\includegraphics[width=0.5\textwidth]{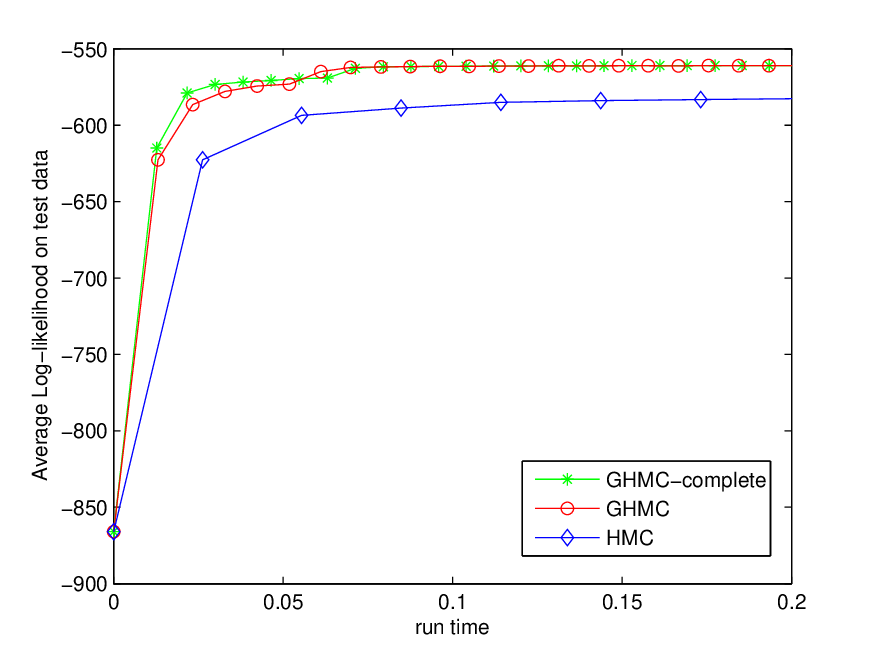}
\end{center}
\caption{The average log-likelihood on test data for a logistic regression model}\label{fig:AvgLogLike}
\end{figure}

\begin{figure}[t]
\begin{tabular}{cc}
\includegraphics[width=0.45\textwidth, height=0.3\textwidth]{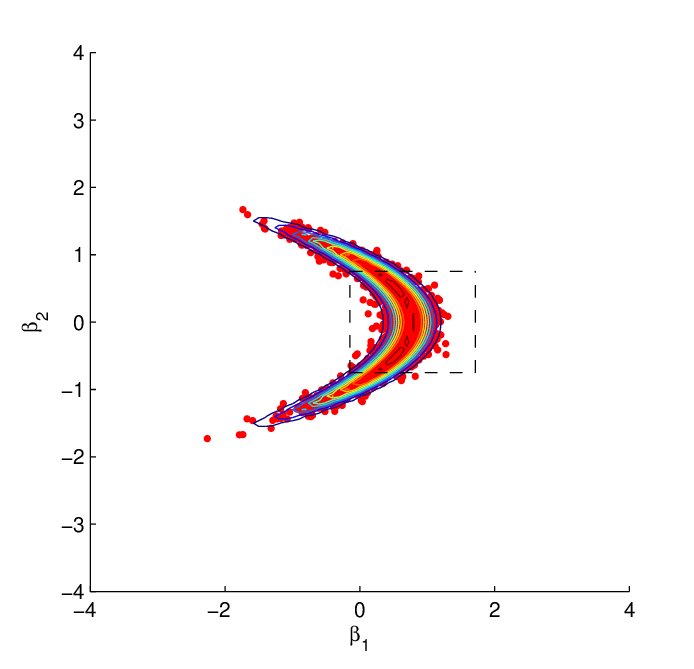} &
\includegraphics[width=0.45\textwidth, height=0.3\textwidth]{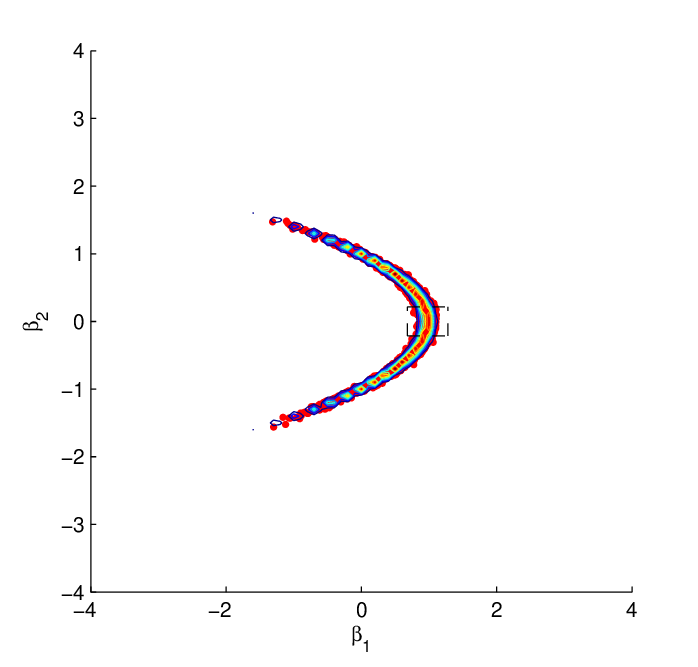} \\
 (c) \small{$N=100$} & (d) \small{$N=1000$} 
\end{tabular}
\caption{Domains of Interest via Laplace's approximation for the banana shaped distribution.}\label{laplaceFail}
\begin{tabular}{cc}
\includegraphics[width=0.45\textwidth, height=0.3\textwidth]{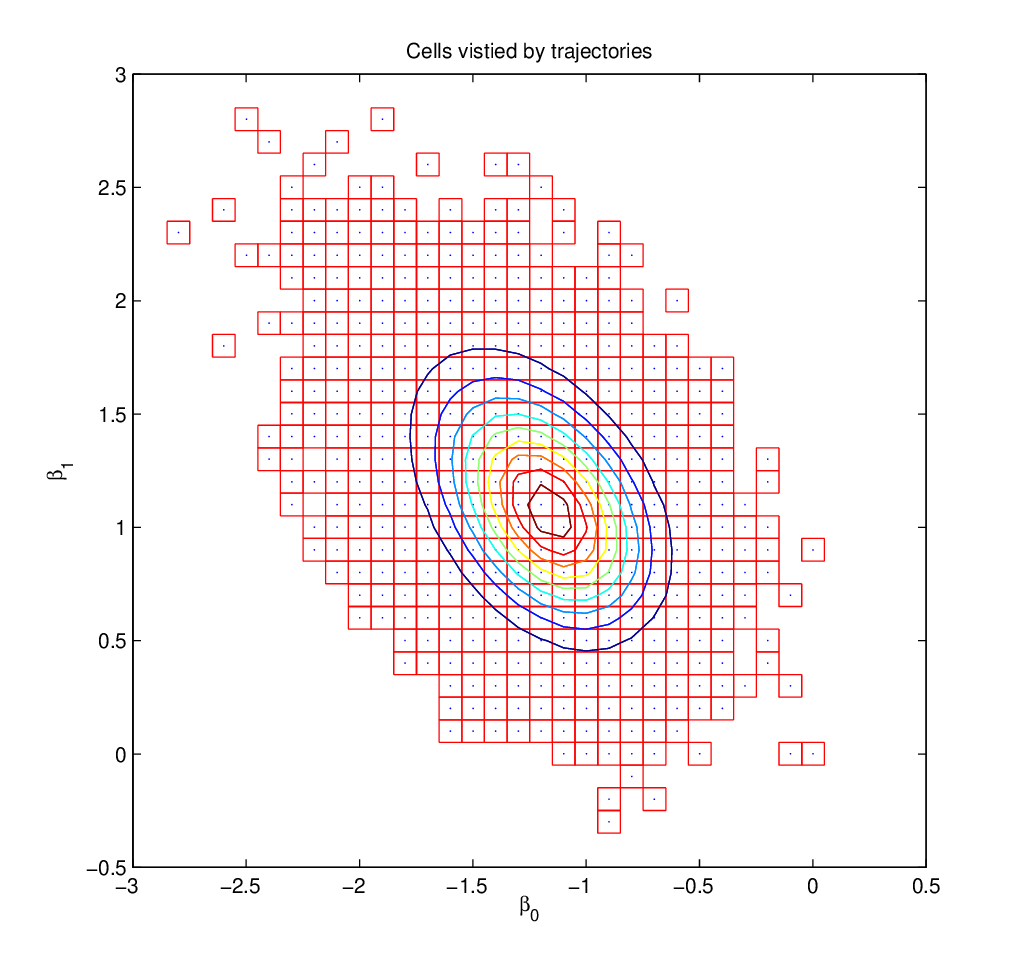}&
\includegraphics[width=0.45\textwidth, height=0.3\textwidth]{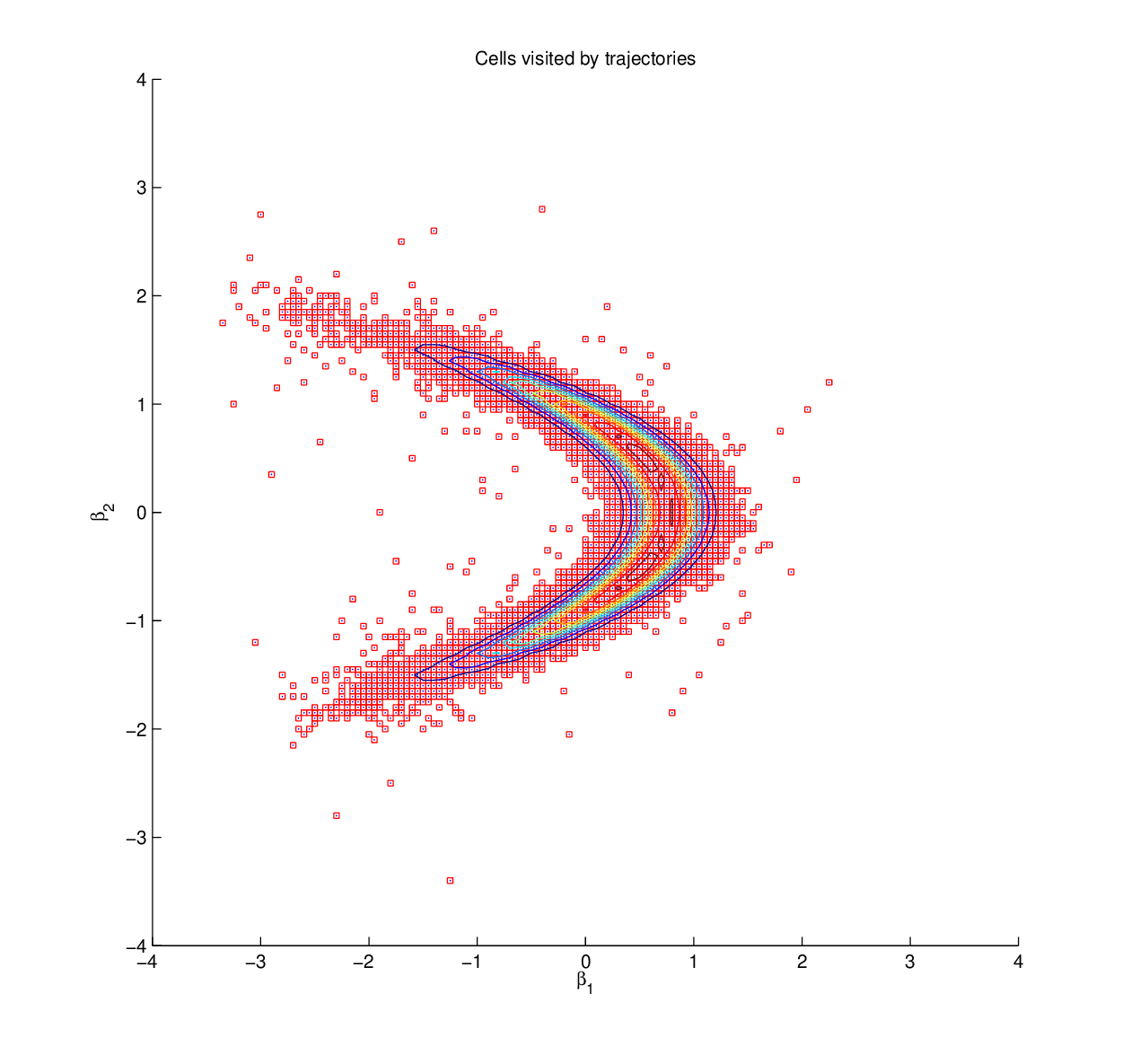} \\
(a) logistic regression  & (b) banana shaped distribution
\end{tabular}
\caption{Domains of Interest via early trajectories.}\label{fig:ROI_trajectory}
\end{figure}

While quite effective in relatively low dimensional problems, extension of grid-based HMC to high dimensional problems could be quite challenging. Future research direction could involve finding effective strategies to alleviate this issue. 

Another direction is to find more efficient method to locate the domain of interest. In subsection \ref{subsec:ROI} we used Laplace's approximation for this purpose. As shown in Figure \ref{laplaceFail}, this strategy might not be effective when the resulting Gaussian distribution is not a good approximation for the target distributions. An alternative and more general approach is based on following the trajectories of the burn-in samples. Even though MCMC samplers might not converge to the target distribution in the early stage, those trajectories can capture the high density region to some extent. Figure \ref{fig:ROI_trajectory} shows the cells visited by those early trajectories for the logistic regression example and the banana shaped distribution example.

The proposed precomputing strategy is not limited to HMC only. In fact, it can be integrated with other MCMC methods involving expensive computation of redundant information. For example, Fisher information matrices can be precomputed at each cell center to accelerate Riemannian Manifold HMC \cite{girolami11}. 

\section*{Acknowledgments}
This work is supported by NIH grant R01AI107034 and NSF grants DMS-1418422 and DMS-1622490. In addition, we appreciate the discussions with S. Lan and T. Chen. 
\clearpage
\appendix
\section{Convergence to the correct distribution}\label{sec:correct}
In order to prove that the equilibrium distribution remains the same, it suffices to show that the detailed balance condition still holds. Note that the alternative Hamiltonian $\tilde{H}(q,p)$ defines a surrogate-induced Hamiltonian flow, parameterized by the trajectory length $t$, which is a map $\tilde{\phi}_t:\;(q,p)\rightarrow(q^\ast,p^{\ast})$. Here, $(q^\ast,p^\ast)$ is the end-point of the trajectory governed by the following equations
\[
\frac{dq}{dt} = \frac{\partial \tilde{H}}{\partial p} = M^{-1}p,\quad \frac{dp}{dt} = -\frac{\partial \tilde{H}}{\partial q} = -\frac{\partial \tilde{U}}{\partial q} = -\tilde{F}
\]
Denote $\theta=(q,p),\;\theta'=(q^\ast,p^\ast)=\tilde{\phi}_t(\theta)$. In the Metropolis-Hasting step, we use the original Hamiltonian to compute the acceptance probability
\[
\alpha(\theta,\theta') = \min(1,\exp[-H(\theta')+H(\theta)])
\]
therefore, 
\begin{align*}
\alpha(\theta,\theta')\mathbb{P}(d\theta) =& \alpha(\theta,\theta')\exp[-H(\theta)]d\theta\\
\stackrel{\theta=\tilde{\phi}_t^{-1}(\theta')}{=} & \min(\exp[-H(\theta)],\exp[-H(\theta')])\left|\frac{d\theta}{d\theta'}\right|d\theta' \\
=& \alpha(\theta',\theta)\exp[-H(\theta')]d\theta'\\
= &\alpha(\theta',\theta) \mathbb{P}(d\theta')
\end{align*}
since $ \left|\frac{d\theta}{d\theta'}\right| = 1$ due to the volume conservation property of the surrogate induced Hamiltonian flow $\tilde{\phi}_t$. Now that we showed the detailed balance condition is satisfied, along with the reversibility of the surrogate induced Hamiltonian flow, the modified Markov chain will converge to the correct target distribution.

\section{More On Sparse Grid}
\subsection{Construction of sparse grid and basis functions}
The Clenshaw-Curtis type sparse grid $H^{CC}$ introduced in subsection \ref{sec:sparsegrid} is constructed from the following formulas. Here, the $x_j^i$ are defined as
\[
x_j^i =\left\{\begin{array}{ll}(j-1)/(m_i-1),\; &j = 1,\ldots,m_i, m_i > 1\\ 0.5, & j=1, m_i =1 \end{array}\right.
\]
In order to obtain nested sets of points, the number of nodes is given by 
\[
m_1 = 1\quad \text{   and   }\quad m_i = 2^{i-1} + 1 \text{   for   } i > 1
\]
Piecewise linear basis functions $a$ can be used for the univariate interpolation formulas $U^i(f)$.
\[
a_1^1(x) = 1,\quad a_j^i(x) = \left\{\begin{array}{ll} 1-(m_i-1)\cdot|x-x_j^i|,\;& |x-x_j^i| < 1/(m_i-1),\\ 0, & \text{otherwise} \end{array}\right.
\] 
for $i>1$ and $j =1,\ldots,m_i$.

\begin{figure}
\begin{tabular}{cc}
\includegraphics[width = .45\textwidth, height = .45\textwidth]{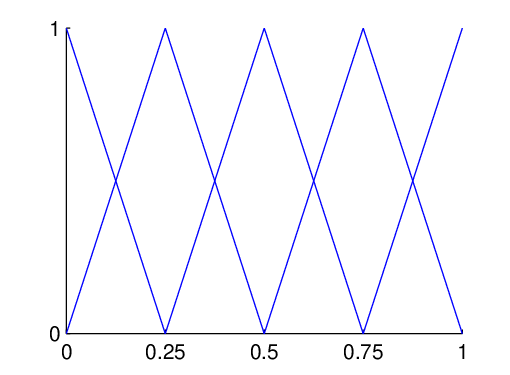} &
\includegraphics[width = .45\textwidth, height = .45\textwidth]{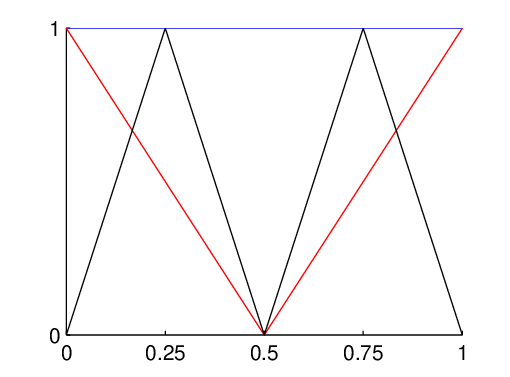}\\
(a) Nodal basis  &  (b) Hierarchical basis
\end{tabular}
\caption{Piecewise linear Nodal basis (a) and hierarchical functions (b) with support nodes $x_j^i\in X_{\Delta}^i,i=1,2,3$ for the Clenshaw-Curtis grid}\label{fig:nodalhierar}
\end{figure}

\subsection{Derivation of the hierarchical form}
With the selection of nested sets of points, we can easily transform the univariate nodal basis into the hierarchical one. By definition, we have
\begin{align*}
\Delta^i(f) &= U^i(f) - U^{i-1}(f) \\
                  &= \sum_{j=1}^{m_i}f(x_j^i)\cdot a_j^i - \sum_{j=1}^{m_i} U^{i-1}(f)(x_j^i)\cdot a_j^i\\
                  &= \sum_{j=1}^{m_i}\big(f(x_j^i) - U^{i-1}(f)(x_j^i)\big) \cdot a_j^i
\end{align*}
since $f(x_j^i) - U^{i-1}(f)(x_j^i) = 0,\; \forall \; x_j^i \in X^{i-1}$,
\begin{equation}\label{eq:unihier}
\Delta^i(f) = \sum_{x_j^i\in X^i_{\Delta}}\big(f(x_j^i)-U^{i-1}(f)(x_j^i)\big)\cdot a_j^i
\end{equation}
From \eqref{eq:unihier} we note that for all $\Delta^i(f)$, only the basis functions belonging to the grid points that have not yet occurred in a previous set $X^{i-k},\;1\leq k \leq i-1$ are involved. Fig.\ref{fig:nodalhierar} gives a comparison of the nodal and the hierarchical basis functions and Fig.\ref{fig:interp} shows the construction of the interpolation formula using nodal basis functions and function values versus using hierarchical basis functions and hierarchical surpluses for a univariate function $f$ . Both figures are reproductions based on Klimke and Wohlmuth \cite{klimke05}.

\begin{figure}
\begin{tabular}{cc}
\includegraphics[width = .45\textwidth, height = .45\textwidth]{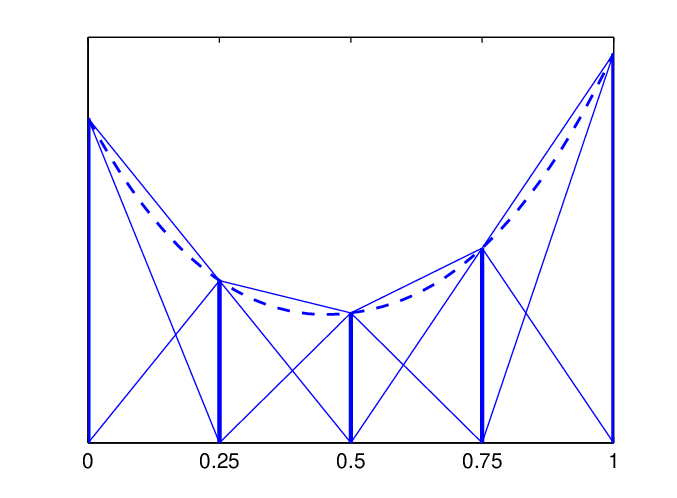} &
\includegraphics[width = .45\textwidth, height = .45\textwidth]{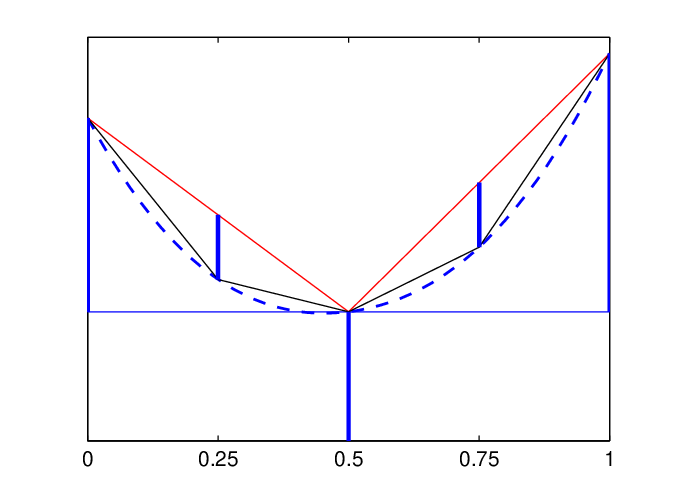}\\
(a) Nodal basis  &  (b) Hierarchical basis
\end{tabular}
\caption{Piecewise linear interpolation: Nodal versus Hierarchical}\label{fig:interp}
\end{figure}

Applying the tensor product formula \eqref{eq:tensor} with $\Delta^i$ given in \eqref{eq:unihier}, the hierarchical update in the Smolyak algorithm \eqref{eq:smolyak} now can be rewritten as
\begin{align*}
\Delta A_{q,d}(f) &= \sum_{|\vect{i}|=q}(\Delta^{i_1}\otimes\cdots\otimes\Delta^{i_d})(f)\\
                             &= \sum_{|\vect{i}|=q}\sum_{x_{j_1}^{i_1}\in X_{\Delta}^{i_1}}\cdots\sum_{x_{j_d}^{i_d}\in X_{\Delta}^{i_d}}\big(f(x_{j_1}^{i_1},\ldots,x_{j_d}^{i_d})-A_{q-1,d}(f)(x_{j_1}^{i_1},\ldots,x_{j_d})\big)\cdot(a_{j_1}^{i_1}\otimes\cdots\otimes a_{j_d}^{i_d})
\end{align*}

\clearpage 
\bibliographystyle{siam}
%\bibliography{refList}

\end{document}